\theoremstyle{plain}
\newtheorem{theorem}{Theorem}[section]
\newtheorem{lemma}[theorem]{Lemma}
\newtheorem*{claim*}{Claim}
\newtheorem{fact}[theorem]{Fact}
\crefname{lemma}{lemma}{lemmas}
\Crefname{lemma}{Lemma}{Lemmas}
\Crefname{claim}{Claim}{Claims}
\Crefname{fact}{Fact}{Facts}
\theoremstyle{definition}
\newtheorem{definition}[theorem]{Definition}
\theoremstyle{remark}
\title{Streaming Max-Cut in General Metrics} 
\author{Shaofeng H.-C. Jiang\thanks{Peking University. Email: \texttt{shaofeng.jiang@pku.edu.cn}}
\and
Pan Peng\thanks{University of Science and Technology of China. Email: \texttt{ppeng@ustc.edu.cn}}
\and
Haoze Wang\thanks{Peking University. Email: \texttt{2200012915@stu.pku.edu.cn}}
}
\DeclareMathOperator{\cut}{cut}
\DeclareMathOperator{\dist}{dist}
\DeclareMathOperator{\poly}{poly}
\DeclareMathOperator{\med}{med}
\newcommand{\maxcut}{\ensuremath{\operatorname{Max-Cut}}\xspace}
\newcommand{\E}{\mathbb{E}}
\newcommand{\X}{\mathcal{X}}
\newcommand{\Y}{\mathcal{Y}}
\begin{document}

\maketitle
   \begin{abstract}
    \maxcut is a fundamental combinatorial optimization problem that has been studied in various computational settings.
    We initiate the study of its streaming complexity in \emph{general metric spaces} with access to distance oracles. 
    We give a $(1 + \epsilon)$-approximate algorithm for estimating the \maxcut value in \emph{sliding-window} streams using only poly-logarithmic space. 
    This is the first sliding-window algorithm for \maxcut even in Euclidean spaces,
    and it matches a known insertion-only space bound in the special case of Euclidean spaces [Chen, Jiang, Krauthgamer, STOC'23].
    In sharp contrast, we give a $\poly(n)$-space lower bound in the \emph{dynamic} streaming setting. This yields a separation from the Euclidean case, where the polylogarithmic-space $(1+\epsilon)$-approximation extends to dynamic streams.
    
On the technical side, our sliding-window algorithm builds on the smooth histogram framework of [Braverman and Ostrovsky, SICOMP'10]. To make this framework applicable, we establish the first smoothness bound for metric \maxcut. Moreover, we develop a streaming algorithm for metric \maxcut in insertion-only streams, whose key ingredient is a new metric reservoir sampling technique.

\end{abstract}

\section{Introduction}

\maxcut is a fundamental combinatorial optimization problem and it has been a central topic in algorithm study
under various settings, including polynomial-time approximation algorithms~\cite{GoemansW95,fernandez1996max,mathieu2008yet,DBLP:journals/corr/Yaroslavtsev14}, streaming algorithms~\cite{frahling2005coresets,DBLP:conf/stoc/ChenJK23,MenandW26}, and sublinear-time algorithms~\cite{goldreich1998property,alon2003random,rudelson2007sampling,indyk2001high,peng2023sublinear}.
In this paper, we initiate the study of the streaming complexity of metric \maxcut.
In metric \maxcut,
we are given an (underlying) general metric space $(V, \dist)$ and a set of input data points $P \subseteq V$.
The goal is to find a subset $S\subseteq P$ such that its cut value, defined as
\begin{equation}
    \cut_P(S) := \sum_{x \in S} \sum_{y \in P \setminus S} \dist(x, y)
\end{equation}
is maximized.
In the streaming setting, the data set $P$ with $n$ points is presented as a data stream of point IDs,
and the distances between two points can be accessed via a distance oracle:
the algorithm can learn the distance between any two points
provided that the corresponding IDs are currently stored by the algorithm, without additional space consumption.
The objective is to output an estimate of the optimal \maxcut \emph{value} of the underlying graph defined by the stream, while using as little space as possible. 
The focus is on estimating the optimal \maxcut \emph{value}, instead of the solution $S \subseteq V$ since storing the solution may already take $\Omega(n)$ space and this trivializes the problem.

This streaming metric \maxcut problem has been recently studied in the special case when the metric is Euclidean,
and $(1 + \epsilon)$-approximation has been obtained using small space~\cite{frahling2005coresets,DBLP:conf/stoc/ChenJK23,MenandW26}, even when the stream has both insertion and deletions.
However, these results crucially exploit the Euclidean structure. A fundamental question is whether such structural assumptions are necessary for obtaining $(1+\epsilon)$-approximations in the streaming model, and whether the results can be extended to arbitrary metric spaces.

\subsection{Our Results}
In this work, we systematically address the streaming complexity of metric \maxcut, and this is the first set of results for streaming \maxcut in general metrics. We focus on three standard models: (1) \emph{Insertion-only streams}: the stream consists solely of point insertions.
(2) \emph{Sliding-window streams}: the stream consists of point insertions, but only the most recent $w$ points ($w\geq 1$) are considered active, and the algorithm must maintain an estimate after every update. When $w = \infty$, this reduces to the insertion-only case. (3) \emph{Dynamic streams}: the stream allows both insertions and deletions of points.
For both the insertion-only and dynamic streaming setting, the algorithm needs only output at the end of the stream.

Our key conceptual message is that the streaming complexity of \maxcut 
demonstrates a strong separation between the insertion-only streams (and more generally sliding-window streams) and dynamic streams,
by giving competitive streaming algorithms for sliding-window streams and a strong lower bound for dynamic streams. This also separates the Euclidean case from the general metric case, since in Euclidean spaces, $(1 + \epsilon)$-approximation with poly-logarithmic space is known even for dynamic streams~\cite{frahling2005coresets,DBLP:conf/stoc/ChenJK23,MenandW26}.

As in the standard litereature of streaming algorithms,
we measure the space in words, 
and each word can store $\poly\log N$ bits, where $N$ is the input size.
We condiser the following distance oracle model throughout.
\begin{definition}[Distance Oracle Model]
    \label{def:model}
Each metric point is represented by a unique identifier (ID). The algorithm accesses
distances only through a distance oracle, which takes two point IDs and
returns their metric distance. The oracle representation itself is not charged
to the algorithm's space complexity,
and we assume each ID or distance fits in one word.
\end{definition}

\paragraph{Sliding-Window Algorithms.}

Our first result is a $(1 + \epsilon)$-approximation to the metric \maxcut value using poly-logarithmic space in sliding-window streams.

\begin{theorem}[Informal; see \Cref{thm:sliding_window}]
\label{thm:intro_sliding_window}
    There is sliding-window algorithm for metric \maxcut that estimates the optimal value for each window within $(1 + \epsilon)$ factor with constant probability under the distance oracle model,
    using space $\poly(\epsilon^{-1}\log(w \Delta))$ where $\Delta$ is the aspect ratio of the metric space and $w$ is the window size.
\end{theorem}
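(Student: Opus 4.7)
My plan is to apply the Braverman--Ostrovsky smooth histogram framework, which reduces the sliding-window problem to two sub-tasks: (i) designing a poly-logarithmic-space insertion-only streaming algorithm that $(1+\epsilon)$-approximates the metric \maxcut value under the distance oracle model, and (ii) proving that metric \maxcut is $\beta$-smooth in their sense, i.e., whenever $B \subseteq A$ are two active point sets (with $B$ a suffix-window subset of $A$) satisfying $\textup{MaxCut}(B) \ge (1-\beta)\cdot\textup{MaxCut}(A)$, then $\textup{MaxCut}(B \cup C) \ge (1-\epsilon)\cdot\textup{MaxCut}(A \cup C)$ for any continuation $C$. Given both ingredients, the framework maintains $O(\beta^{-1}\log(w\Delta))$ concurrent copies of the insertion-only algorithm started at geometrically chosen stream positions, yielding total space $\poly(\epsilon^{-1}\log(w\Delta))$.

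For sub-task (i), I would develop a metric reservoir sampling scheme maintaining a poly-logarithmic-size subsample $S \subseteq P$ such that $\textup{MaxCut}(S)$, suitably rescaled, $(1+\epsilon)$-approximates $\textup{MaxCut}(P)$. A natural starting point is that a uniform sample of size $k = \poly(\epsilon^{-1},\log n)$ concentrates the cut value up to rescaling: by the triangle inequality every metric Max-Cut instance is \emph{dense} in the sense that $\textup{MaxCut}(P) = \Theta\bigl(\sum_{u,v\in P}\dist(u,v)\bigr)$, so a random sample simultaneously approximates the average pairwise distance and the optimal partition. Reservoir sampling realizes a uniform sample under insertions, and the distance oracle is queried only among the $k$ stored IDs when estimating $\textup{MaxCut}(S)$ at query time (e.g., by brute force or an off-line $(1+\epsilon)$-approximation on $S$). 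Small-space refinements (stratification by distance scale, independent boosting) should then deliver the target accuracy within polylog space.

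For sub-task (ii), my plan is to exploit two structural facts about metric \maxcut: inserting a point $z$ into $Q$ increases $\textup{MaxCut}(Q)$ by at most $\sum_{y\in Q}\dist(z,y)$, and this quantity is $O(\textup{MaxCut}(Q))$ by the denseness bound above. Under the hypothesis $\textup{MaxCut}(B) \ge (1-\beta)\cdot\textup{MaxCut}(A)$, the excess mass contributed by $A \setminus B$ therefore accounts for at most an $O(\beta)$-fraction of either side of the target inequality once we extend by $C$. The heart of the argument is a partition-swapping step: take an optimal cut of $A \cup C$, restrict it to $B \cup C$ (and conversely, extend a good cut of $B$ to $A$ by placing each point in $A \setminus B$ on its locally better side), and show via the marginal-contribution bound that only an $O(\beta)$-fraction of the optimum is lost; setting $\beta = \poly(\epsilon)$ absorbs the loss into a $(1-\epsilon)$ factor.

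The main obstacle is the smoothness proof itself, since no smoothness bound is known for metric \maxcut and the optimal partition can shift non-trivially when $A \setminus B$ or $C$ is added even if the Max-Cut values are close. Making the partition-swapping argument go through while losing only an $O(\epsilon)$ factor, and tying the loss cleanly to the denseness-based marginal-contribution bound rather than to trivial diameter bounds (which would be too lossy in the presence of a large aspect ratio $\Delta$), is where I expect the technical heart of the result to lie. Once this smoothness lemma is in hand, plugging the metric reservoir sampling algorithm into the Braverman--Ostrovsky framework directly yields the claimed $\poly(\epsilon^{-1}\log(w\Delta))$-space $(1+\epsilon)$-approximation.
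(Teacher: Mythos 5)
Your overall architecture (smooth histograms $+$ an insertion-only algorithm $+$ a smoothness lemma) matches the paper, but both sub-tasks contain genuine gaps as you have planned them. First, uniform reservoir sampling does not suffice for sub-task (i). The denseness fact $\maxcut(P)=\Theta\bigl(\sum_{u,v}\dist(u,v)\bigr)$ is true, but the total pairwise distance can be concentrated on very few points: take $n-1$ coincident points and one point at distance $D$, so $\maxcut(P)=(n-1)D$, while a uniform sample of $\poly(\epsilon^{-1}\log n)$ points misses the outlier with probability $1-o(1)$ and reports $0$. No rescaling fixes this; the estimator has the right expectation but unbounded relative variance. This is exactly why the paper uses \emph{importance} sampling with probability proportional to $q(x)=\sum_y\dist(x,y)$ (invoking a lemma of Chen--Jiang--Krauthgamer), and the technical heart of the insertion-only algorithm is implementing that sampling in a stream: a reservoir-style rule that accepts $p_i$ with probability $\approx \frac{1}{K}\cdot q'(p_i)/\sum_{j\le i}q'(p_j)$ for the prefix-based score $q'(p_i)=\sum_{x\in P_i}\dist(p_i,x)$, with $q'$ itself only available approximately via a streaming $1$-median coreset. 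Your ``stratification by distance scale'' remark gestures at the right issue, but in a general metric with only a distance oracle there is no grid or tree structure to stratify by, and making this step work is most of the content of the upper bound.

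Second, the smoothness statement you set out to prove in sub-task (ii) is false as stated: $f=\maxcut$ is \emph{not} smooth in the Braverman--Ostrovsky sense. If $A$ consists of many copies of a single location and $B$ is a one-point suffix, then $\maxcut(A)=\maxcut(B)=0$, so the hypothesis $(1-\beta)\maxcut(A)\le\maxcut(B)$ holds vacuously for every $\beta$, yet for a far-away $C$ one has $\maxcut(A\cup C)\gg\maxcut(B\cup C)$, so no $\alpha<1$ works. Relatedly, your marginal-contribution bound ``$\sum_{y\in Q}\dist(z,y)=O(\maxcut(Q))$'' fails for exactly such degenerate $Q$. The paper circumvents this by perturbing the objective to $f(S):=\maxcut(S)+|S|\cdot\epsilon/n$, proving that this $f$ is $(\epsilon,\epsilon/64)$-smooth, running the smooth-histogram reduction on $f$, and handling the $\maxcut(W)=0$ windows by thresholding the output at $1/2$ (using that nonzero distances are at least $1$). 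Your partition-swapping idea for the nondegenerate case is in the spirit of the paper's proof (which bounds $\cut(A\setminus B,B)$, $\cut(A\setminus B,A\setminus B)$ and, via the triangle inequality, $\cut(A\setminus B,C)$ by $O(\beta)\cdot\cut(B\cup C,B\cup C)$), but without the additive correction the lemma you need simply does not hold.
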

The above result is based on a new algorithm for metric \maxcut in the insertion-only streams with polylogarithmic space (see \Cref{thm:insertion}). As mentioned, prior streaming bounds for metric \maxcut were known only in Euclidean spaces,
and the state-of-the-art~\cite{DBLP:conf/stoc/ChenJK23,MenandW26} was a $(1 + \epsilon)$-approximation using space $\poly(\epsilon^{-1} d \log \Delta)$ for points coming from $[\Delta]^d$ ($\Delta \in \mathbb{N}$ and $[\Delta] := \{1, \ldots, \Delta\}$). Furthermore, it is unknown if this Euclidean bound holds for sliding window\footnote{While the result of~\cite{DBLP:conf/stoc/ChenJK23} works for dynamic streams, their setting is not comparable to our sliding-window streams
as the deletions must be explicitly given in their dynamic streaming setting whereas the sliding-window deletes an element implicitly.
}.
Our result nearly matches this competitive trade-off,
works for sliding window, and works for general metrics which does not rely on any specific structure of Euclidean spaces\footnote{Our result readily applies to Euclidean case with an additional $O(d)$ factor in space complexity, since the trivial distance oracle that returns the $\ell_2$ distance between two points in $\mathbb{R}^d$ only uses $O(d)$ space.
}.

\paragraph{Lower Bound for Dynamic Streams.}
Our second result is a lower bound for dynamic streams, and it shows that no algorithm using small space can achieve even $\poly(n)$-approximation in general metric spaces.
Here, the input of a dynamic stream consists of both insertions and deletions of point (IDs), and the algorithm needs to report the (estimate of the) \maxcut value at the end of the stream,
with respect to the inserted points that are not yet deleted.
The distance oracle model is the same as in \Cref{def:model},
where the algorithm must provide the IDs of the point pair for querying the distance.

\begin{theorem}[Informal; see \Cref{thm:lb}]
\label{thm:intro_lb}
There is no dynamic streaming algorithm that estimates the metric \maxcut value within $O(\Delta / n^{1 / 3})$ factor using space $o(n^{1/3})$ with constant probability, where $\Delta$ is the aspect ratio and $n$ is the size of the stream.
\end{theorem}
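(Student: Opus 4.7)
The strategy is to prove the space lower bound via a one-way two-party communication complexity reduction.  Any streaming algorithm with memory $S$ immediately gives a one-way protocol of $S$ bits---Alice runs the algorithm on her portion of the stream and sends the state to Bob, who continues---so it suffices to embed a hard distributional communication problem into a length-$n$ dynamic stream of metric \maxcut instances in which the \maxcut value separates the two classes by a $\poly(n)$ factor.

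The main construction would be a ``spike'' family of metrics: almost all pairwise distances lie in $[1,2]$, but a small set of designated spike pairs have distance $M = \poly(n)$, with auxiliary points placed so that the triangle inequality holds around each spike (every other point is at distance at least $M/2$ from at least one of the spike's endpoints).  In the YES case, at least one spike pair is active at the end of the stream and contributes $\Theta(M)$ to the optimum $\cut_P(S)$, giving a cut of $\Omega(M)$; in the NO case no spike pair is active and the cut is $O(|P|^2)$, so the ratio is $\poly(n)$ for suitable $M$.  I would then reduce from a standard problem such as INDEX (or a disjointness variant) on $N = \Theta(n^{1/3})$-bit inputs whose one-way randomized communication complexity is $\Omega(n^{1/3})$: Alice's input $x \in \{0,1\}^N$ determines which of $N$ potential spike endpoints she inserts (each coming with a gadget of $\Theta(n^{2/3})$ auxiliary insertions to realize the spike inside a valid metric), and Bob's input $i \in [N]$ is encoded by an insertion/deletion pattern that activates the $i$-th spike pair iff $x_i = 1$.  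The stream length is $\Theta(N \cdot n^{2/3}) = \Theta(n)$, the \maxcut estimate reveals $x_i$ through the $\poly(n)$ gap, and the protocol solves INDEX with $O(S)$ bits, forcing $S = \Omega(n^{1/3})$.

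The principal obstacle is the gadget design.  We must produce a genuine metric---not merely a weighted graph---where (i) every candidate spike can be toggled by a small number of stream updates, (ii) the YES/NO \maxcut gap is at least $\poly(n)$, and (iii) the per-bit gadget is small enough that $N$ bits fit in a length-$n$ stream.  Preserving the triangle inequality around a distance-$M$ spike is what forces the auxiliary-point overhead per candidate, and the exact size of this overhead is what determines the $1/3$ exponent in the space bound; reducing the overhead would strengthen the lower bound but seems to be obstructed by the metric property itself.  A secondary subtlety is that the distance oracle allows the algorithm to adaptively query pairwise distances among any IDs it currently holds, so the hard distribution must prevent the algorithm from simply guessing and querying spike endpoints---this is ensured by placing the spike candidates in a sufficiently large universe so that a random query among held points misses every spike with high probability.
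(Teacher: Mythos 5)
There is a genuine gap: the INDEX reduction is incompatible with the distance-oracle model, and the point where it breaks is exactly where all the difficulty of this lower bound lives. In your construction, Alice's input $x$ is encoded \emph{in the metric} (which spike endpoints exist), and the streaming algorithm has oracle access to that metric. When you convert the algorithm into a one-way protocol, the party simulating the suffix of the stream (Bob) inherits that oracle access. Worse, for Bob to even emit the deletion updates that isolate the $i$-th gadget, he must know the IDs of every inserted point; but once the final surviving set of $\Theta(n^{2/3})$ IDs is known, the algorithm can simply query all pairwise distances among them at the end of the stream and compute $\maxcut$ of the survivors \emph{exactly}, using no information from Alice's message at all. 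So the protocol you extract solves INDEX with zero communication, which only shows that the reduction, not the algorithm, is broken. The standard $\Omega(N)$ bound for INDEX presupposes that Bob's only access to $x$ is Alice's message; here the oracle is a side channel carrying $x$, and your remark about ``placing the spike candidates in a sufficiently large universe'' does not repair this, because the problematic queries are not random guesses but deliberate queries on IDs the suffix-constructor necessarily knows.

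The paper's proof is forced, for precisely this reason, to abandon communication reductions and argue directly against a deterministic algorithm under a hard distribution (Yao's principle). The mechanism that makes the oracle unhelpful is that every ID carries an independent random bit $a_{i,j}$, and querying an ID with the wrong bit is an \emph{invalid} query that makes the algorithm fail immediately. An algorithm with $m=0.01\,n^{1/3}$ bits of memory can effectively commit to only $O(m)$ salted IDs at a time, so the probability that it ever queries the one revealing pair (the outlier $p_{i^*,j^*}$ against another survivor) is bounded by roughly $n^{1/3}\bigl(2^{-m}+2m/n^{2/3}\bigr)$; the $1/3$ exponent comes from balancing the number of rows $n^{2/3}$ against the number of columns $n^{1/3}$ and the memory, not from any triangle-inequality overhead (the paper's two-level cluster metric satisfies the triangle inequality with no auxiliary points). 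If you want to salvage a communication-style argument, you would have to define a new communication problem in which the oracle access and the random ID salts are part of the model, and prove its hardness from scratch --- which is essentially the direct argument the paper gives.
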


This clearly shows that the dynamic streaming setting is significantly ``harder'' than the sliding-window setting or insertion-only setting for metric \maxcut.
In addition, this also shows an interesting difference to the Euclidean case:
In Euclidean spaces, $(1 + \epsilon)$-approximation using poly-logarithmic space (similar to that in \Cref{thm:intro_sliding_window}) work also for dynamic streams~\cite{frahling2005coresets,DBLP:conf/stoc/ChenJK23}.

We remark that similar model for our lower bounds, which reveals the distance only upon query, were also used in e.g.~\cite{Cohen-AddadSS16};
in fact, the model in~\cite{Cohen-AddadSS16} is even more restricted,
where the algorithm is assumed to store the IDs explicitly,
and the space complexity is counted as the number of stored IDs.
However, a recent work~\cite{KhannaPSW25} introduces
a stronger notion: the metric is public such that the algorithm is allowed to query any pairs, not only for the IDs in the stream (which is a subset of the full metric points).
Under this model, a lower bound for diameter problem is established~\cite{KhannaPSW25}, and it is an interesting open question to establish a \maxcut lower bound under the same model (which will be stronger than \Cref{thm:intro_lb}).

\vspace{-0.5em}

\subsection{Technical Overview}
\label{sec:tech_overview}
\subsubsection{Upper Bounds}
At a high level, 
we make use of the smooth histogram framework~\cite{braverman2010effective} to obtain the sliding-window algorithm.
This framework requires us to a) devise an insertion-only streaming algorithm for metric \maxcut, and b) to derive a smoothness upper bound for metric \maxcut.
These two ingredients are both nontrivial for general metric streams,
and are our main technical contributions.

\paragraph{An Insertion-only Algorithm.}
For insertion-only streams, our high-level idea is to (approximately) implement an importance sampling framework in general metrics, 
which is suggested in~\cite{DBLP:conf/stoc/ChenJK23} to devise a Euclidean streaming algorithm for \maxcut.
In this framework, each data point $x \in P$ is sampled with probability roughly proportional to its total distance from all other points, denoted $q(x):=\sum_{y \in P} \mathrm{dist}(x, y)$.
With a sample $S \subseteq P$ of size $\poly(\epsilon^{-1})$, we then evaluate $\maxcut(S)$.
As in~\cite{DBLP:conf/stoc/ChenJK23} (restated in \Cref{lem:sampling lower bound}), this evaluation yields a $(1+\epsilon)$-approximation to $\maxcut(P)$. 
In our distance oracle model, evaluating $\maxcut(S)$ is straightforward since we can query the oracle to obtain all pairwise distances within $S$.  

The main challenge lies in performing the importance sampling.
In particular, $q(x)$ may rely on the entire stream, and it is not easy to estimate upon the arrival of $x$.
To resolve this issue, our key observation is that, an approximate $\hat q(x)$ which estimates the sum of distances from $x$ to only the prefix until the arrival of $x$,
is already a good approximation for $q(x)$,
and our algorithm may be viewed as an importance sampling with respect to $\hat q(x)$'s.

Specifically, we propose the following modified reservoir-sampling procedure.
Reservoir-sampling is a classic approach is to generate uniform samples from an insertion-only stream,
and here we augment it to return importance sampling.
The process maintains a single sample $s \in P$.
Label the data points in $P := \{p_1, \ldots, p_n\}$ according to their order of arrival in the stream, and let $P_i := \{p_1, \ldots, p_i\}$ denote the prefix of the first $i$ points.  
When the $i$-th point $p_i$ is inserted
we consider the prefix-based quantity $q'(p_i) := \sum_{x \in P_i} \dist(p_i, x)$.
Then we replace the current sample with $p_i$ with probability $\frac{\hat q(p_i)}{\sum_{j \leq i} \hat q(p_j)}$,
and do nothing otherwise.

However, it is nontrivial to maintain $\hat q(x)$ precisely, and we use a coreset method to approximate $\hat q(x)$'s.
Observe that the definition of $\hat q(x)$ coincides with the $1$-median objective $\med(P_i, x)$: for $U \subseteq V$ and $y \in V$, $\med(U, y) := \sum_{y' \in U} \dist(y', y)$. 
It is known that an $\epsilon$-coreset (see \Cref{def:coreset}) for $1$-median can be maintained efficiently in insertion-only streams (see \Cref{lem:coreset}).  
This result serves as a key building block for our final algorithm, and the $\epsilon$-approximation error is carefully accounted for in the analysis of \Cref{lem:sampling lower bound}.

\paragraph{Smoothness of Metric \maxcut.}
To leverage the framework of \cite{braverman2010effective}, we need to show that our insertion-only algorithm provides an approximation of a smooth function. 
Roughly speaking, a function $f$ is smooth, if for any $0 < \epsilon < 1$,
there exist $0 < \beta < \alpha < 1$,
such that if some $B \subseteq A \subseteq V$ satisfies $(1 - \beta) f(A) \leq f(B)$
then $(1 - \alpha) f(A \cup C) \leq f(B \cup C)$ for every $C \subseteq V$.
A natural attempt is to prove the smoothness for $f = \maxcut$.
However, somewhat surprisingly, $f(\cdot) = \maxcut(\cdot)$ does \emph{not} satisfy this smoothness definition.
To see this, consider the 1D case, and $f$ does not satisfy the condition for 
$A =\{1, \ldots, 1\}$, then $B = \{1\}$, $C = \{0\}$,
because $f(A) = f(B) = 0$ but $f(A \cup C) =n > f(B \cup C) / (1 - \alpha) = 1 / (1 - \alpha)$ for any $\alpha$, where $n$ is a sufficiently large number.

We identify that the issue is caused by $0$-valued $f$ (e.g., $f(A) = f(B) = 0$ as in the above example).
Hence, our new idea is to add a tiny term to $\maxcut$, and define $f(S) := \maxcut(S) + \frac{\epsilon}{n} |S|$ to bypass this issue,
assuming the minimum non-zero distance is $1$ in the metric (otherwise one can rescale).
We show that this new $f$ is indeed smooth.
On the other hand, this tiny term does not add much error to $\maxcut$, so estimating $f$ suffices for estimating $\maxcut$.
Indeed, if $\maxcut(P) > 0$ (where $P$ is the dataset), then the added term is at most $\epsilon \maxcut(P)$,
due to an assumption that the minimum non-zero distance is $1$. Hence, in this case, we can readily use the smooth histogram framework and reduce to estimating $f$ in an insertion-only stream.
Finally, the other case of $\maxcut(P) = 0$ can be easily detected in a sliding window and hence can be handled separately.

\subsubsection{Lower Bounds}
We prove the lower bound directly instead of reducing to some known hard problems.
We construct two hard input instances that differ only locally and slightly, but their \maxcut values differ significantly. We then show that it is hard to differentiate the two inputs.

At a high level, the metric of the hard input consists of many clusters such that the distance between the clusters is $\Delta\geq \Omega(n^{1/3})$ (where $n$ is the number of points),
and inside the clusters the pairwise distance is $1$ (very small compared with $\Delta$),
except for one special cluster of size $\Theta(n^{1/3})$, denoting whose point set as $S$.
This special cluster $S$ has an outlier point whose distance to every other $|S| - 1$ points in $S$ is some parameter $k \in \{1, \Delta\}$, thereby defining the two hard instances.
The stream first sequentially inserts all points, then delete every point except for the special cluster $S$, so this special cluster $S$ is the final input dataset.
If $k = 1$ then $\maxcut(S) \leq O(|S|^2)< \Delta|S|$,
whereas if $k = \Delta$ then $\maxcut(S) \geq \Delta|S|$. 
Furthermore, any algorithm with approximation ratio $o({\Delta}/{|S|})$ must be able to tell if $k = 1$ or $k = \Delta$, and this is the main claim of our proof. 

Without diving into the proof details, we discuss the intuition why our input is hard in a conceptual sense.
The first angle is how the deletions make it hard.
Indeed, the \maxcut for the entire input is easy to approximate as ignoring the special cluster, or even ignoring any constant fraction of clusters, does not change the objective by much.
However, the deletion forces the algorithm to learn the structure of a tiny special cluster $S$ accurately,  which is a hard task.
Another angle is how we utilize the general metric and distance oracle setting, especially how this is different from the Euclidean case.
In fact, the metric in our hard input can be embedded into Euclidean spaces without large distortion, hence it is not the metric itself that is hard. 
What makes it harder than Euclidean case,
is that the algorithm can only learn the distances through a distance oracle, without the access to the vector representation in Euclidean spaces.
In particular, this forces the algorithm to explicitly query the distance between the outlier point and some other point in $S$ to differentiate whether $k = 1$ or $k = \Delta$ (which is crucial for approximating the \maxcut),
and explicitly discover the outlier point is difficult. 
On the other hand, in Euclidean spaces one can use structures such as (space efficient) tree embedding to approximately learn the cluster structure and differentiate the outlier, without explicitly knowing the outlier point (see e.g.,~\cite{DBLP:conf/stoc/ChenJK23}).

Due to space constraint, we defer the proof of the lower bound to \Cref{app:lowerbound}.

\vspace{-0.5em}

\subsection{Related Work}

The approximation algorithms for \maxcut has been very well studied.
In general graphs, the seminal result of Goemans and Williamson~\cite{GoemansW95} gave a $0.878$-approximation,
and this ratio is shown to be tight assuming the Unique Games Conjecture (UGC)~\cite{DBLP:journals/siamcomp/KhotKMO07}.
If the input is restricted to dense graphs (roughly, graphs with $\Omega(n^2)$ edges), 
then \maxcut admits PTAS's~\cite{GoemansW95,fernandez1996max,mathieu2008yet,DBLP:journals/corr/Yaroslavtsev14},
and the state-of-the-art achieves near-linear running time~\cite{DBLP:journals/corr/Yaroslavtsev14}.
Similar PTAS's also exist for general metric spaces (which in a sense is a dense graph)~\cite{DBLP:journals/jcss/VegaK01, indyk2001high}.

Under streaming model,
for graph streams, it is shown that breaking the $2$-approximation barrier requires $\Omega(n)$ space, even in insertion-only streams and when the edges are present in a random order~\cite{DBLP:conf/stoc/KapralovK19}. 
In contrast, for geometric streams in $\mathbb{R}^d$, 
it is possible to obtain $(1 + \epsilon)$-approximation in polylogarithmic  space~\cite{DBLP:conf/stoc/ChenJK23}.
Recently, this result has been extended to provide not only an approximate value but also an oracle to return an approximate \maxcut solution~\cite{MenandW26}. Furthermore, Dong et al. \cite{dong2025learning} recently gave a learning-augmented streaming algorithm that approximates the value of \maxcut of a general graph with an approximation ratio slightly better than $2$.  \vspace{-0.5em}

\section{Preliminaries}
\label{sec:prelim}
For $m\in \mathbb{N}$, write $[m]:=\{1,2,\ldots,m\}$.
Let $(V, \dist)$ by the underlying metric space. Given a point set $P$, Given a point set $P$, the \emph{aspect ratio} of $P$ is defined as the ratio between its largest and smallest non-zero pairwise distances.  
For some $\alpha \geq 1$, 
we say a non-negative real number $E$ is $\alpha$-approximate to $\maxcut(P)$ for dataset $P$,
if $E \leq \maxcut(P) \leq \alpha E$.
We interpret a set as a multi-set throughout.
The cut function $\cut : V \times V \to \mathbb{R}_+$ is defined as
$\cut(S, T) := \sum_{x \in S} \sum_{y \in T} \dist(x, y)$ for $S, T \subseteq V$.
For a subset $S\subseteq V$ and a dataset $P \subseteq V$,
write $\cut_P(S) := \cut(S, P \setminus S)$.
The \maxcut value of a dataset $P \subseteq V$ is defined as
\[
    \maxcut(P) := \max_{S \subseteq P} \cut_P(S).
\]
For $\alpha \geq 1$, an $\alpha$-approximate solution for Max-Cut of $P$ is 
a subset $S\subseteq V$ such that 
\[
\maxcut(P) / \alpha \leq \cut_P(S) \leq   \maxcut(P)
\]

\begin{definition}[Weighted Set]
    A weighted set is a pair $(S,w_S)$ where $S$ is a finite set of points and $w_S: S \to \mathbb{R}_{\geq 0}$ assigns a nonnegative weight to each element of $S$.
    \end{definition}
    \begin{definition}[\maxcut on Weighted Sets]
        Let $(S, w_S)$ be a weighted set.
        We define 
        \[\maxcut(S):=\max_{T\subseteq S}\sum_{x\in T}\sum_{y\in S\setminus T}w_S(x)w_S(y)\dist(x,y).\]
    \end{definition}

 \section{Algorithms for Insertion-Only Streams}
\label{sec:insertion_only}

In this section, we present an algorithm to approximate the value of \maxcut for insertion-only streams and establish the following theorem.
The sliding-window result is based on this insertion-only one,
and will be presented in~\Cref{sec:sliding_window}.

\begin{theorem}
    \label{thm:insertion}
    Suppose $(V, \dist)$ is an underlying metric space.
    There is an algorithm that given as input
    $\epsilon \in (0, 1/2)$, $\Delta \geq 1$,
    and a point set $P \subseteq V$ presented as an insertion-only stream, 
    such that if $\dist(x, y) > 0$ then $1\le \dist(x,y)\le \Delta$ for all $x,y\in P$,
    outputs a $(1 + \epsilon)$-approximation to $\maxcut(P)$
with probability at least $2/3$,
using space $\poly(\frac{\log (n\Delta)}{\epsilon})$ where $n$ is the size of $P$.
\end{theorem}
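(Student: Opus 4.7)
The plan is to implement, in a single pass, the importance-sampling framework for metric Max-Cut proposed in~\cite{DBLP:conf/stoc/ChenJK23}: draw $m = \poly(\epsilon^{-1}\log n)$ points from $P$ with probabilities (approximately) proportional to their $1$-median contributions $q(x) := \med(P, x)$, and estimate $\maxcut(P)$ by a suitably reweighted Max-Cut value on the sample. In the distance-oracle model, once such a multiset $S$ is stored, evaluating the reweighted estimator is essentially free, since all pairwise distances within $S$ can be queried directly; and \Cref{lem:sampling lower bound} then certifies that the output is a $(1+\epsilon)$-approximation with constant probability. The entire difficulty is therefore concentrated in the task of producing one such sample in a single pass using small space.

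For this I would use a metric analogue of classical reservoir sampling. Process the stream while maintaining a single reservoir slot $s$; upon the arrival of $p_i$, use the prefix-based score $q'(p_i) := \med(P_i, p_i) = \sum_{j \le i}\dist(p_i, p_j)$, and with probability $q'(p_i)/\sum_{j \le i}q'(p_j)$ overwrite $s$ with $p_i$, otherwise do nothing. A short telescoping computation shows that, with exact scores, the final reservoir satisfies $\Pr[s = p_i] = q'(p_i)/\sum_j q'(p_j)$, i.e.\ one samples proportionally to the prefix scores. Running $m$ independent parallel copies yields the desired sample $S$. Since $q'(p_i) = \med(P_i, p_i)$ is exactly a $1$-median objective with respect to a single center, \Cref{lem:coreset} lets us maintain an $\epsilon$-coreset of $P_i$ in $\poly(\epsilon^{-1}\log(n\Delta))$ space and use it to estimate $q'(p_i)$ within a $(1 \pm \epsilon)$ factor; these estimates replace the exact values in the reservoir rule, and $O(\log(n\Delta))$ bits suffice to store the running cumulative estimate.

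Putting the pieces together, the total space is dominated by $m$ copies of the coreset, which yields the claimed $\poly(\epsilon^{-1}\log(n\Delta))$ bound. The main obstacle I anticipate is entirely in the analysis: one must show that sampling proportional to the prefix scores $q'(\cdot)$, rather than the global scores $q(\cdot)$, remains compatible with the approximation guarantee of \Cref{lem:sampling lower bound}, while simultaneously absorbing the $(1\pm\epsilon)$ multiplicative errors coming from the coreset-based estimates of both the numerator and the denominator of each reservoir update. This is exactly the step where, as the technical overview notes, the coreset's $\epsilon$-error must be ``carefully accounted for'' in the sampling analysis; once handled, a union bound over the $m$ parallel reservoirs together with a rescaling of $\epsilon$ delivers the $(1+\epsilon)$-approximation with constant probability claimed by the theorem.
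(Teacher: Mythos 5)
Your high-level plan matches the paper's (importance sampling via a metric reservoir procedure, with prefix scores estimated by a streaming $1$-median coreset), but the specific reservoir rule you propose does not work, and the missing piece is not a bookkeeping issue with the coreset errors --- it is the core of the proof. With the rule ``replace $s$ by $p_i$ with probability $q'(p_i)/\sum_{j\le i}q'(p_j)$,'' your telescoping computation is correct and yields $\Pr[s=p_t]=q'(p_t)/\sum_j q'(p_j)$, i.e.\ sampling \emph{exactly} proportional to the prefix score $q'(p_t)=\sum_{i\le t}\dist(p_t,p_i)$. But the precondition of \Cref{lem:chen} requires $\Pr[s=p_t]\ge \frac1\lambda\cdot q(p_t)/Q_n$ with the \emph{global} score $q(p_t)=\sum_{i=1}^n\dist(p_t,p_i)$, and the ratio $q'(p_t)/q(p_t)$ can be arbitrarily small for early-arriving points. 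Concretely, let $p_1,p_2$ be at distance $1$ from each other and at distance $\Delta$ from $n-2$ mutually close later points: then $q'(p_1)=0$ and $q'(p_2)=1$, so $p_1$ is never sampled and $p_2$ is sampled with probability $\Theta(1/(n\Delta))$, while $q(p_1)/Q_n=\Theta(1)$ when $\Delta\gg n$; the sample then misses the cut that dominates $\maxcut(P)$, and no $\lambda=\poly(\epsilon^{-1}\log(n\Delta))$ can rescue the guarantee. So the statement you defer (``prefix-proportional sampling remains compatible with the approximation guarantee'') is false for your rule, not merely unproven.

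The paper's fix changes the update rule in two ways (\Cref{algo:sample one}). First, the replacement probability is the damped prefix \emph{ratio} $\beta_t=\frac1K\cdot\frac{\hat R_t}{\hat Q_t}$ with $K=\Theta(\log(n^2\Delta))$ and $\hat Q_t\approx Q_t=\sum_{i,j\le t}\dist(p_i,p_j)$; the telescoping identity is sacrificed, but the bound $\sum_t R_t/Q_t\le\ln(n^2\Delta)$ (which follows from $Q_t=Q_{t-1}+2R_t$ and $Q_2/Q_n\ge 2/(\Delta n^2)$) guarantees the survival product $\prod_{i>t}(1-\beta_i)\ge 1/2$, so $\Pr[s=p_t]\ge\beta_t/2$. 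Second, and crucially, \Cref{lem:cnt} shows via the triangle inequality that $\frac{2R_t}{Q_t}\ge\frac{q(p_t)}{Q_n}$ --- the prefix \emph{ratio} does dominate the global \emph{ratio} even though the prefix \emph{score} does not dominate the global \emph{score} --- which is exactly what makes $\lambda=O(K)$ work in \Cref{lem:chen}. You would also need the degenerate case $Q_t=0$ (handled with $\beta_t=1/t$ and \Cref{lem:beta}) and to return the exactly tracked probability $w=\beta_t\prod_{i>t}(1-\beta_i)$ as the weight. Without the $1/K$ damping, the ratio-comparison lemma, and the $Q_t=0$ case, the argument does not go through.
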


For technical reasons, we assume throughout that $\maxcut(P) > 0$; the case $\maxcut(P) = 0$ is trivial. In the following, we first present the necessary tools; then we describe the algorithm; we analyze it and prove \Cref{thm:insertion}.
\subsection{Some Tools} 
Our high-level framework is the following importance sampling
suggested by~\cite{DBLP:conf/stoc/ChenJK23}. 
Roughly speaking, one samples each data point $x$ with probability (approximately) proportional to the sum of distances from $x$ to every other points,
then around $\poly\log n$ samples is enough to approximate the \maxcut value.

\begin{lemma}[Importance Sampling~\cite{DBLP:conf/stoc/ChenJK23}]
    \label{lem:chen}
    Given $\varepsilon, \delta > 0$, $\lambda \geq 1$, 
    metric space $(V, \dist)$ and dataset $P \subseteq V$, 
    let $\mathcal{D}$ be a distribution $(p_x : x \in P)$ on $P$ such that $\forall x \in P,\, p_x \geq \frac{1}{\lambda} \cdot \frac{q(x)}{Q}$, where $q(x) := \sum_{y \in X} \dist(x, y)$
    and $Q := \sum_{x \in X} q(x)$.
    Let $S$ be a \emph{weighted} set that is obtained by an \textit{i.i.d.} sample of $m \geq 2$ points from $\mathcal{D}$, weighted by $w_S(x) := \hat{p}_x$ such that $p_x \leq \hat{p}_x \leq (1 + \varepsilon) \cdot p_x$. 
    If $m \geq O(\varepsilon^{-4} \lambda^8)$, then with probability at least $0.9$, the value
    $\frac{\mathrm{Max\text{-}Cut}(S)}{m^2}$
    is a $(1 + \varepsilon)$-approximation to $\mathrm{Max\text{-}Cut}(P)$.
\end{lemma}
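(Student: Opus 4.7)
The plan is to establish $\maxcut(S)/m^2 \approx \maxcut(P)$ by treating the two inequalities $\maxcut(S)/m^2 \geq (1-\varepsilon)\maxcut(P)$ and $\maxcut(S)/m^2 \leq (1+\varepsilon)\maxcut(P)$ separately, since a single-cut Bernstein bound cannot survive a union bound over the $2^{|S|}$ cuts of the sampled set $S$. The lower bound will follow from concentration at one fixed cut, while the upper bound requires uniform convergence across all cuts of $S$.

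For the lower bound, I would fix an optimal partition $T^* \subseteq P$ achieving $\cut_P(T^*) = \maxcut(P)$ and analyze the specific cut of $S$ induced by $T^*$, namely $T := \{i \in [m] : S_i \in T^*\}$. Expanding $\cut_S(T)$ as a sum over ordered pairs $(i,j)$ of sampled points and computing its expectation under the i.i.d.\ draws, the assumption $\hat{p}_x \in [p_x, (1+\varepsilon)p_x]$ combined with the definition of $\mathcal{D}$ should yield $\E[\cut_S(T)] = (1 \pm O(\varepsilon)) \cdot m(m-1) \cdot \cut_P(T^*)$. For concentration, the crucial step is bounding the variance of a single summand by using the sensitivity-style lower bound $p_x \geq q(x)/(\lambda Q)$, which controls the inverse-probability blowup of a typical pair; a Bernstein application on the resulting sum yields the one-sided estimate with $m = \Omega(\varepsilon^{-2} \cdot \mathrm{poly}(\lambda))$ samples, comfortably within the stated bound.

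For the upper bound, the approach is to show that every cut $T \subseteq S$ corresponds, via the natural map sending sampled copies back to their originals in $P$, to a partition $T_P \subseteq P$ with $\cut_S(T)/m^2$ approximating $\cut_P(T_P) \leq \maxcut(P)$, and then argue that this approximation holds uniformly in $T$. I would view $\cut_S(T)$ as a bilinear form in the indicator variables of $T$ with random coefficients derived from the sample, and then apply a fourth-moment estimate (rather than variance) combined with a net argument over cuts to survive the $2^{|S|}$-fold union bound. The target $\varepsilon^{-4}\lambda^8$ sample size is consistent with invoking the sensitivity bound $p_x \geq q(x)/(\lambda Q)$ twice in each factor of the second moment, and then squaring to lift variance control to fourth-moment control.

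The main obstacle lies in this uniform convergence step for the upper bound. A pointwise Bernstein bound gives only polynomially small failure probability per cut, which is hopeless against the $2^{|S|}$ cuts of a poly-sized sample. Circumventing this requires either a Talagrand-style chaining for the supremum of the bilinear form defining $\maxcut(S)$, or a coreset-style structural claim that the weighted sample $(S, w_S)$ $\varepsilon$-preserves every cut value of $P$ — analogous to the coreset arguments underlying PTASs for dense-graph \maxcut. I expect the cleanest route is the coreset perspective: show that the sample preserves sufficiently many weighted-distance statistics of the metric graph on $P$, and derive \maxcut preservation via a fourth-moment calculation whose $\varepsilon^{-4}\lambda^8$ cost matches the stated sample-size bound.
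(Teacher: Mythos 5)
The paper does not prove this lemma: it is cited verbatim from \cite{DBLP:conf/stoc/ChenJK23} and used as a black box (see the proof of \Cref{thm:insertion}, where it is invoked via \Cref{lem:sampling lower bound}). There is therefore no proof in this paper against which to compare your sketch.

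As a reconstruction of the cited argument, your decomposition into a fixed-cut lower bound and a uniform upper bound over all cuts of the sample is the right skeleton, and you correctly identify the $2^{|S|}$ union bound as the core obstacle. However, the sketch leaves a genuine gap: nothing in it explains how \emph{additive} concentration of the sampled bilinear form becomes a \emph{multiplicative} $(1\pm\varepsilon)$ guarantee for $\maxcut$. In a generic importance-sampling argument, variance or fourth-moment control on a sum of pairwise terms bounds the deviation in absolute units of $Q=\sum_{x,y}\dist(x,y)$. What makes the metric case work is the density property recorded in this paper as \Cref{fact:mc}, namely $\maxcut(P)\ge \tfrac14\cut(P,P)$, so an additive error $\varepsilon Q$ is automatically $O(\varepsilon)\cdot\maxcut(P)$. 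Your plan never invokes this, so whichever route you pursue for uniformity (chaining, a coreset argument, or higher moments), it cannot close the claimed multiplicative bound without it. You should also pin down the weighting convention: for the estimator $\maxcut(S)/m^2$ to be approximately unbiased at a fixed cut, the weight must scale as the \emph{reciprocal} of the sampling probability, $w_S(x)\propto 1/\hat p_x$; only then does the hypothesis $\hat p_x\in[p_x,(1+\varepsilon)p_x]$ thread cleanly through the expectation and variance computations you outline.
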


We also require the following coreset tool for $1$-median clustering.  
Given a set $U \subseteq V$ and a point $y \in V$, the $1$-median cost of $y$ with respect to $U$ is defined as $\sum_{y' \in U} \dist(y', y)$.
\begin{definition}[{Coreset~\cite{Har-PeledM04}}]
    \label{def:coreset}
    A weighted set $S \subseteq P$ with weight $w_S' : S \to \mathbb{R}_+$
    is called an $\epsilon$-coreset for $1$-median,
    if
\[
    \forall x \in P, \qquad
     \sum_{p \in P} \dist(p, x) \in
     (1 \pm \epsilon) \cdot \sum_{s \in S} \dist(x, s) \cdot w_S'(s).
\]
\end{definition}

\begin{lemma}[Streaming Algorithms for Coreset\cite{Chen09}]
    \label{lem:coreset}
There exists a streaming algorithm that, given $\epsilon \in (0,1)$ and a set $P \subseteq V$ with aspect ratio $\Delta$ presented as an insertion-only stream $p_1,p_2,\dots,p_n$, maintains an $\epsilon$-coreset $(C,w_C’)$ for the 1-median problem on $P$. Specifically, with probability $1-\frac{1}{n^2}$, for every timestamp $t \geq 1$, letting $P_t = {p_1,\ldots,p_t}$ denote the first $t$ points, the coreset satisfies: \[
\forall x \in P_t, \qquad
\sum_{p \in P_t} \dist(p,x) \in (1 \pm \epsilon) \cdot \sum_{s \in C} \dist(x,c)\,w_C'(c).
\]
The coreset uses $\poly(\epsilon^{-1} \log (n\Delta))$ space.
\end{lemma}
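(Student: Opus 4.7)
The plan is to combine an offline coreset construction for the $1$-median problem with the classical merge-and-reduce streaming framework of Bentley--Saxe / Har-Peled--Mazumdar.

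\textbf{Step 1 (offline coreset).} First I would establish an offline subroutine: given a weighted set $P' \subseteq V$ of size $N$ with aspect ratio at most $\Delta$, produce an $\epsilon$-coreset of size $s(\epsilon, N, \Delta) = \poly(\epsilon^{-1}\log(N\Delta))$ with failure probability at most $1/N^3$. Following Chen's approach, I would (i) obtain an $O(1)$-approximate $1$-median center $c^*$ (for the $1$-median problem a random point of $P'$ suffices up to constants, by a standard two-point argument); (ii) partition $P'$ into distance rings $R_j = \{x \in P' : 2^j \le \dist(x, c^*) < 2^{j+1}\}$, of which at most $O(\log(N\Delta))$ are non-empty; and (iii) in each non-empty ring sample uniformly $m_j = \poly(\epsilon^{-1}\log N)$ points, each reweighted by $|R_j|/m_j$. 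The coreset guarantee follows from: for each fixed query $x \in P'$ and each ring $R_j$, the quantities $\dist(p, x)$ for $p \in R_j$ lie in an interval of multiplicative width $O(1)$ once $\dist(x, c^*)$ is compared against $2^j$ (by triangle inequality); a Hoeffding bound then gives a $(1 \pm \epsilon)$-approximation of $\sum_{p \in R_j} \dist(p, x)$ per ring, and a union bound over the $N$ choices of $x$ absorbs a $\log N$ factor into $m_j$.

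\textbf{Step 2 (merge-and-reduce).} I would then make the construction streaming by merge-and-reduce. The algorithm buffers incoming points in a level-$0$ buffer until $B$ points accumulate, at which point it runs the offline construction to form a level-$0$ coreset; whenever two coresets exist at level $i$, they are merged (the union of two coresets is a coreset of the union, with weights concatenated) and the offline construction is re-run on the merged weighted set to form a level-$(i+1)$ coreset. At any moment at most one coreset is active per level, and the number of levels is $L = O(\log n)$. A standard composition lemma controls the error across levels, giving a multiplicative $(1 \pm \epsilon')^L$ guarantee. Setting the per-call precision $\epsilon' = \Theta(\epsilon / \log n)$ yields a final $(1 \pm \epsilon)$-coreset while each stored coreset retains size $\poly(\epsilon^{-1}\log(n\Delta))$.

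\textbf{Step 3 (probability and space accounting).} Across the stream there are at most $O(n)$ offline invocations; each fails with probability $\le 1/N^3 \le 1/n^3$, and a union bound over invocations and over the $n$ prefixes $P_t$ gives the claimed $1 - 1/n^2$ success. The total state is $O(\log n) \cdot \poly(\epsilon^{-1}\log(n\Delta)) = \poly(\epsilon^{-1}\log(n\Delta))$.

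\textbf{Main obstacle.} The delicate step is Step~1: turning per-ring additive Hoeffding estimates into a \emph{multiplicative} guarantee uniformly over every query $x \in P'$. Rings in which $x$ is much closer or much farther than $c^*$ behave differently, and rings contributing negligibly to $\sum_{p} \dist(p,x)$ must not blow up the relative error; handling this requires a careful case split based on the relative scale of $\dist(x, c^*)$ versus $2^j$, and relies on the fact that the total cost is $\Omega(\sum_j |R_j| \cdot 2^j)$ so that each ring's contribution can be charged to the optimum. Ensuring the union bound over all $x \in P'$ and over all $n$ prefixes costs only a logarithmic factor, while keeping the dependence on $\Delta$ inside a log, is the technical heart of Chen's construction and the point at which any looser accounting would break the polylogarithmic space bound.
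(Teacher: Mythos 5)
This lemma is imported as a black-box citation to Chen~\cite{Chen09}; the paper offers no proof of its own, so there is no in-paper argument to compare against. Your proposal does correctly reconstruct the standard architecture (ring-based offline sampling plus merge-and-reduce with precision $\epsilon' = \Theta(\epsilon/\log n)$), but Step~1 as written has a gap that breaks the streaming guarantee.

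In Step~1 you union-bound over the $N$ query points $x \in P'$ present at the time of the offline construction. The lemma, however, requires that the coreset held at every timestamp $t$ be good for \emph{all} $x \in P_t$. In merge-and-reduce, a level-$i$ coreset built at some earlier time $t'$ remains a component of the coreset at all later times $t > t'$, and must then correctly estimate $\sum_{p \in A}\dist(p,x)$ for query points $x = p_t$ that did not exist, and hence were not in the union bound, when that component was built. To close this you must make the per-call coreset guarantee uniform: either prove it for all $x \in V$ (a net/chaining argument over the cost function $x \mapsto \sum_{p \in R_j}\dist(p,x)$), or union-bound over the entire final dataset $P = P_n$, which is what Chen actually does and why his coreset size carries a $\log n$ factor with $n$ the total stream length rather than the chunk size. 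The latter requires knowing $n$ (or an upper bound, or a standard doubling guess) in advance, so that $m_j = \poly(\epsilon^{-1}\log n)$ and the per-call failure probability is set to $\le 1/n^3$ in terms of the global $n$; only then does your Step~3 union bound over $O(n)$ invocations and $n$ timestamps yield the claimed $1 - 1/n^2$. As written, the construction may fail on query points arriving after a coreset component is frozen.
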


\subsection{The Algorithm}
We propose a variant of importance sampling as in~\Cref{lem:chen}.
Our importance sampling uses some $q'$ as the importance score, 
which is defined with respect to only a prefix of the stream, instead of $q$.  
Specifically, given a data stream $p_1, \dots, p_n$ of points in $P$, let $P_i := \{p_1, \ldots, p_i\}$ be the prefix of the first $i$ points. 
For each $p_i$, we define $q'(p_i) := \sum_{x \in P_i} \dist(p_i, x)$ with respect to the prefix.
We maintain an $\epsilon$-coreset for the $1$-median objective $\med(P_i, x)$ (as in \Cref{lem:coreset}),
and use it to maintain a single sample $s \in P$:
when $p_i$ arrives, we replace the current sample with $p_i$ with probability (approximately) proportional to $\frac{q'(p_i)}{\sum_{j \leq i} q'(p_j)}$, 
and otherwise keep the current sample. We give the full algorithm in \Cref{algo:sample one}.

\begin{algorithm}[t]
    \caption{Sampling one point $s$ from $P$ represented as a stream $p_1, \dots, p_n$.}
    \label{algo:sample one}
    \DontPrintSemicolon
    $s \gets \textbf{null}$, $\hat{Q}_0 \gets 0, w \gets 1, K \gets \Theta(\log(\Delta n^2))$\;
    initialize $(C,w_{C}')$ as an $\epsilon$-coreset for $1$-median (initially  $C=\emptyset$, $w_{C}'(c)=0$ for any $c$)\;
\For{$t = 1, 2, \ldots$}{
        $\hat{R}_t \gets$ $\sum_{c \in C} \dist(c, p_t) w_{C}'(c)$\;
$\hat{Q}_t \gets \hat{Q}_{t-1} + 2\hat{R}_t$\;
        \If{$\hat{Q}_t = 0$}{
            $\beta_t \gets \frac{1}{t}$
        }
        \Else{
            $\beta_t \gets \frac{1}{K} \cdot \frac{\hat{R}_t}{\hat{Q}_t}$\;
        }
        with probability $\beta_t$, $s \gets p_t$; if $s$ is updated, $w \gets \beta_t$, otherwise $w \gets w \cdot (1 - \beta_t)$\;
        update the coreset $(C,w_{C}')$ as specified in \Cref{lem:coreset}\;
}
    \KwRet $(s, w)$\;
\end{algorithm}

The final algorithm for estimating $\maxcut(P)$ is given in \Cref{algo:maxcut},
where we plug in \Cref{lem:chen} the approximate estimate as in \Cref{algo:sample one}.
\begin{algorithm}[t]
    \caption{Estimating $\maxcut$ of the set $P$ represented as a stream $p_1, \dots, p_n$.}
    \label{algo:maxcut}
    \DontPrintSemicolon
    \If{$\dist(p_i,p_j)=0$ holds for all $i,j$}{
    \KwRet $0$\;
    }
    let $K$ be as in \Cref{algo:sample one}, and let $\lambda := \tfrac{4K(1+\epsilon)}{1-\epsilon}$\;

    maintain $m := \Theta(\epsilon^{-4}K^{8})$ independent runs of \Cref{algo:sample one} during the update of the stream\;

    query each run to obtain a sample $(s_i, w_i)$ to
    define $(S, w_S) := (\{s_i\}_{i=1}^m, \{w_i\}_{i=1}^m)$\; 
\KwRet $\eta:=\frac{\mathrm{Max\text{-}Cut}(S)}{m^2}$ as in \Cref{lem:chen}\;
\end{algorithm}

\subsection{Proof of \Cref{thm:insertion}}
We start with the analysis of \Cref{algo:sample one}. 
Define $R_t := \sum_{i=1}^t \dist(p_t, p_i)$ as the sum of distances from $p_t$
to the prefix $P_t$,
and $Q_t := \sum_{i=1}^t \sum_{j=1}^t \dist(p_i, p_j)$ be the total sum of all distances among the prefix $P_t$.
Without less of generality, we assume $Q_n>0$.
Recall that $s$ is the sample returned by \Cref{algo:sample one}.
We show that for any point $x \in P$, the probability that $s = x$ is at least $\Omega\left(\frac{1}{K} \cdot \tfrac{q(x)}{Q}\right)$, where $q(x) = \sum_{y \in P} \dist(x,y)$ and $Q = Q_n$.
This bound essentially says that the importance score $q'(x)$ that is defined only with respect to the prefix,
is a good approximation for the $q(x)$'s defined with respect to the full stream.
This further allows us to invoke \Cref{lem:chen} and thereby ensure the correctness of our algorithm.
Formally, we establish the following key lemma.

\begin{lemma}
    \label{lem:sampling lower bound}
Let $(s, w)$ denote the return value of \Cref{algo:sample one}.
With probability $ 1 - 1 / n^2$, for every $p_t \in P$,
the following inequality holds:
\[
    w = \Pr[s = p_t] \geq \frac{1-\epsilon}{4K(1+\epsilon)} \cdot \frac{\sum_{i=1}^n \dist(p_t, p_i)}{Q_n}.
\]
\end{lemma}

The proof of this lemma is postponed to \Cref{sec:proof_key_lemma},
and we first conclude the proof of \Cref{thm:insertion} assuming \Cref{lem:sampling lower bound} is correct.

\begin{proof}[Proof of \Cref{thm:insertion}]
We now show that \Cref{algo:maxcut} outputs a good estimate $\eta$ for $\maxcut(P)$. It suffices to consider the non-trivial case where there exist $p_i, p_j$ with $\dist(p_i,p_j)>0$. 

Note that $K = O(\lambda)$.
Our algorithm invokes \Cref{algo:sample one} $m$ times with $m = O(\epsilon^{-4}K^{8}) = O(\epsilon^{-4}\lambda^{8})$
to obtain the sample set $S$ and corresponding weights.
\Cref{lem:sampling lower bound} ensures that each sample $s$ with weight $w$ from \Cref{algo:sample one}
satisfies the premises of \Cref{lem:chen},
where the distribution $\mathcal{D}$ is given by $p_x = \Pr[s=x] \geq \tfrac{1}{\lambda}\cdot \tfrac{q(x)}{Q}$, with $q(x) := \sum_{y \in P} \dist(x,y)$ and $Q := \sum_{x \in P} q(x)$. Furthermore, the weight is $w_S(x) = \Pr[s=x]$. Since we invoke \Cref{algo:sample one} for $m \ll n$ times, \Cref{lem:chen} implies that with probability at least $1 - \tfrac{m}{n^2} - 0.1 > \tfrac{2}{3}$, the estimate $\eta$ is a $(1+\epsilon)$-approximation of $\maxcut(P)$.

 To bound the space complexity, note that by \Cref{lem:coreset}, a single invocation of \Cref{algo:sample one}
 requires $\poly(\epsilon^{-1}\log(n\Delta))$ space to maintain the coreset. Since we invoke \Cref{algo:sample one} $m$ times, the total space for all coresets is
 $m \cdot \poly(\epsilon^{-1}\log(n\Delta)) = \poly(\epsilon^{-1}\log(n\Delta))$.
 After obtaining the set $S$, we query all $O(m^2)$ pairwise distances to estimate the weighted max cut, which requires $O(m^2) = \poly(\epsilon^{-1}\log(n\Delta))$ space.
 Hence, the overall space complexity is $\poly(\epsilon^{-1}\log(n\Delta))$.
\end{proof}

\subsection{Proof of Key \Cref{lem:sampling lower bound}}
\label{sec:proof_key_lemma}

\begin{proof}
    Since we consider $s=p_t$, then $p_t$ is sampled when it comes,
    and all subsequent points are not sampled, so the sampling probability $\Pr[s = p_t]$ is given by:
    $w = \Pr[s = p_t] = \beta_t \cdot \prod_{i = t+1}^n (1 - \beta_i)$.

\paragraph{Case I: $Q_t > 0$.}
    In the following, we consider the case that $Q_t > 0$.
    Using the inequality $\prod_i(1 - x_i) \geq 1 - \sum_i x_i$ for $0\le x_i\le 1$, we have: 
    $\Pr[s = p_t] \geq \beta_t \cdot \left(1 - \sum_{i = t+1}^n \beta_i\right)$. 
Substituting the definition of $\beta_i$, this becomes:
    \begin{equation}
    \label{eqn:Pr_s_pt}    
    \Pr[s = p_t] \geq \beta_t \cdot \left(1 - \frac{1}{K} \cdot \sum_{i = t + 1}^n \frac{\hat{R}_i}{\hat{Q}_i}\right).
    \end{equation}
Next, we bound $\frac{\hat R_i}{\hat{Q}_i}$ in the following lemma.
\begin{lemma}
\label{obs:R}
With probability at least $1-\frac{1}{n^2}$, for every timestamp $t$, it holds that 
\begin{align}
(1-\epsilon) R_t  &\leq  \hat{R}_t  \leq  (1+\epsilon) R_t.\label{eqn:Rt} \\
    (1-\epsilon) Q_t &\leq \Hat{Q}_t \leq (1+\epsilon) Q_t.\label{eqn:Qt}
\end{align}
\end{lemma}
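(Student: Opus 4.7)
The plan is to reduce both inequalities to a single application of the $1$-median coreset guarantee of \Cref{lem:coreset}. I would instantiate that lemma with precision parameter $\epsilon' = \epsilon/3$; this gives that with probability at least $1 - 1/n^2$, for every timestamp $t$ the maintained pair $(C, w_C')$ is simultaneously a valid $\epsilon'$-coreset for $1$-median on the prefix $P_t$. I would condition on this event for the remainder of the argument.

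For \eqref{eqn:Rt}, observe that $R_t = \sum_{i=1}^t \dist(p_t,p_i) = \sum_{p \in P_t} \dist(p,p_t)$ and the query point $p_t$ lies in $P_t$, so the coreset guarantee applied at $x = p_t$ directly yields $R_t \in (1 \pm \epsilon') \hat{R}_t$. Inverting this inclusion uses the elementary bounds $1/(1-\epsilon') \leq 1 + \epsilon$ and $1/(1+\epsilon') \geq 1 - \epsilon$, valid for $\epsilon \leq 1/2$, which produces the required form $\hat{R}_t \in (1 \pm \epsilon) R_t$.

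For \eqref{eqn:Qt}, I would express both sides as prefix sums and then apply \eqref{eqn:Rt} termwise. Unrolling the recursion $\hat{Q}_t = \hat{Q}_{t-1} + 2\hat{R}_t$ from the base case $\hat{Q}_0 = 0$ gives $\hat{Q}_t = 2 \sum_{s=1}^t \hat{R}_s$. Using symmetry of $\dist$ and $\dist(p_s, p_s) = 0$, we also have
\[
    Q_t = \sum_{i=1}^t \sum_{j=1}^t \dist(p_i,p_j) = 2 \sum_{1 \leq i < s \leq t} \dist(p_i, p_s) = 2 \sum_{s=1}^t R_s,
\]
where the last equality uses $R_s = \sum_{i < s} \dist(p_s, p_i)$ (the $i = s$ diagonal term vanishes). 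Summing the termwise bound \eqref{eqn:Rt} over $s = 1, \ldots, t$, which is legal because the $R_s$ are nonnegative, then yields \eqref{eqn:Qt}.

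The only subtlety I foresee is the ``direction flip'' between the coreset's native form $R_t \in (1\pm\epsilon') \hat{R}_t$ and the statement's form $\hat{R}_t \in (1\pm\epsilon) R_t$; this is resolved cleanly by the constant rescaling of $\epsilon$ indicated above. Beyond this, everything reduces to a telescoping identity and a termwise multiplicative bound on nonnegative sums, so I do not expect any substantive obstacle.
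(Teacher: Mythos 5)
Your proof is correct and follows essentially the same route as the paper's: apply the $1$-median coreset guarantee of \Cref{lem:coreset} at the query point $x = p_t$ to get \eqref{eqn:Rt}, then sum termwise over the prefix (using $\hat{Q}_t = 2\sum_{s\le t}\hat{R}_s$ and $Q_t = 2\sum_{s\le t}R_s$) to get \eqref{eqn:Qt}. Your explicit rescaling $\epsilon' = \epsilon/3$ to handle the direction flip between $R_t \in (1\pm\epsilon')\hat{R}_t$ and $\hat{R}_t \in (1\pm\epsilon)R_t$ is a small point the paper glosses over, and it is handled correctly.
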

\begin{proof}
By \Cref{lem:coreset}, it holds that with probability at least $1-1/n^2$, for any $t$ and for all $x \in P_t$, 
$\sum_{p \in P_t} \dist(p,x) \in (1 \pm \epsilon) \cdot \sum_{s \in C} \dist(x,c)\,w_C'(c)$. In the following, we assume this event holds.

Note that $R_t = \sum_{i=1}^t \dist(p_t,p_i)$ and $\hat{R}_t = \sum_{c \in C} \dist(c,p_t)w_C'(c)$. Thus \eqref{eqn:Rt} holds for all $t$.
Furthermore, since $\Hat{Q}_t = \sum_{i=1}^t \Hat{R}_i$, applying \eqref{eqn:Rt} 
    we get
        $\Hat{Q}_t = \sum_{i=1}^t \Hat{R}_i \ge (1-\epsilon)\cdot \sum_{i=1}^t R_t = (1-\epsilon) Q_t$. 
Similarly, $\Hat{Q}_t \le (1+\epsilon) Q_t$.
Hence, \eqref{eqn:Qt} holds for all $t$.
This finishes the proof of \Cref{obs:R}.
\end{proof}

    Continue with \eqref{eqn:Pr_s_pt},
    applying \Cref{obs:R} which bounds $\frac{\hat{R}_t}{\hat{Q}_t}$ in terms of $\frac{R_t}{Q_t}$, we get
    \begin{equation}
        \label{eqn:beta_t}
    \beta_t = \frac{1}{K} \cdot \frac{\hat{R}_t}{\hat{Q}_t} \geq \frac{1}{K} \cdot \frac{1-\epsilon}{1+\epsilon} \cdot \frac{R_t}{Q_t}.
    \end{equation}
Next, we establish the following \Cref{lem:cnt},
    to bound $\frac{R_t}{Q_t}$ with respect to $\frac{\sum_{i=1}^n \dist(p_t, p_i)}{Q_n}$.
\begin{lemma}
    \label{lem:cnt}
    For any $t$, if $Q_t > 0$, then the following inequality holds:
\[
    \frac{2R_t}{Q_t} 
    \geq \frac{\sum_{i=1}^n \dist(p_t, p_i)}{Q_n}.
\]
\end{lemma}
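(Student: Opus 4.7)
The plan is to prove the equivalent inequality $2R_t Q_n \geq S_t Q_t$ by a decomposition of $Q_n$ followed by a three-way case analysis. Let $d_i := \dist(p_t,p_i)$, $B := \sum_{i \leq t,\, j > t} \dist(p_i,p_j)$, and $A := \sum_{i,j > t}\dist(p_i,p_j)$. Splitting the pairs in $Q_n$ by whether they lie inside, cross, or lie outside $P_t$ gives $Q_n = Q_t + 2B + A$, and a short rearrangement yields
\[
2R_t Q_n - S_t Q_t \;=\; (2R_t - S_t)\,Q_t \,+\, 4R_t\,B \,+\, 2R_t\,A,
\]
so it suffices to show the right-hand side is nonnegative.

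Before the case split, I would collect three consequences of the triangle inequality. First, $\dist(p_i,p_k) \leq d_i + d_k$ summed over pairs in $P_t$ gives the standard bound $Q_t \leq 2(t-1)R_t$. Second, and most importantly, for $i \leq t$, $j > t$, and any $k \leq t$, the triangle inequality yields $\dist(p_i,p_j) \geq \dist(p_i,p_k) - \dist(p_k,p_j)$; averaging over $k \leq t$ and exploiting symmetry gives $t\cdot\dist(p_i,p_j) \geq |R^{(t)}(i) - R^{(t)}(j)|$ where $R^{(t)}(x) := \sum_{k \leq t}\dist(p_k,x)$. Summing over all cross pairs and using $\sum|x|\geq|\sum x|$ produces $tB \geq |(n-t)Q_t - tB|$, which in either sign of the argument yields $B \geq (n-t)Q_t/(2t)$. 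Third, $\dist(p_i,p_j) \geq d_j - d_i$ summed over cross pairs gives $B \geq t S_t - n R_t$ whenever the right-hand side is positive.

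With these in hand, the proof splits into three cases. If $S_t \leq 2R_t$, the first term of the rewritten identity is nonnegative and the others are too, so we are done. If $S_t > 2R_t$ and $tS_t \leq 2nR_t$, the second estimate gives $Q_n \geq Q_t + 2B \geq nQ_t/t$, and hence $2R_t Q_n \geq 2nR_t Q_t/t \geq S_t Q_t$. If instead $S_t > 2R_t$ and $tS_t > 2nR_t$, then $B \geq tS_t - nR_t > 0$, and combining this with $Q_t \leq 2(t-1)R_t$ and $A \geq 0$ and simplifying yields
\[
2R_t Q_n - S_t Q_t \;\geq\; 2R_t\bigl[(t+1)S_t - 2(n-t+1)R_t\bigr];
\]
the bracket is positive because the case assumption $S_t > 2nR_t/t$ together with the elementary inequality $n(t+1) \geq t(n-t+1)$, which rearranges to $n + t(t-1) \geq 0$, gives $(t+1)S_t > 2(n-t+1)R_t$.

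The main obstacle I expect is locating the right lower bound on $B$ in the range $tS_t \leq 2nR_t$: the easy bounds $B \geq S_t - R_t$ (from the triangle through $p_t$) and $Q_n \geq 2S_t$ (from edges incident to $p_t$) are too weak to close this case when $Q_t$ is near its maximum $2(t-1)R_t$. The triangle-through-intermediate-$k$ argument that produces $B \geq (n-t)Q_t/(2t)$ is the essential new ingredient; once it is in hand, the case analysis and the remaining algebra are routine.
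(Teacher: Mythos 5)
Your proof is correct, and it takes a genuinely different route from the paper's. The paper never clears denominators: it writes $\frac{\sum_{i=1}^n \dist(p_t,p_i)}{Q_n}$ as a ratio of sums whose first terms are $R_t$ and $Q_t$ and whose remaining terms pair each future point $p_a$ with the cross mass $2\sum_{j\le t}\dist(p_a,p_j)$, applies the mediant inequality $\frac{\sum_i c_i}{\sum_i d_i}\le \max_i \frac{c_i}{d_i}$, and is then left with the single per-point claim $\frac{\dist(p_t,p_a)}{2\sum_{j\le t}\dist(p_a,p_j)}\le \frac{2R_t}{Q_t}$, which it proves by contradiction using $Q_t\le 2tR_t$ and one triangle-inequality lower bound on $R_t$ through $p_a$. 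Your argument instead aggregates all the cross terms into $B$, and the ingredient you correctly identified as essential --- the averaged triangle inequality giving $B\ge (n-t)Q_t/(2t)$ --- is precisely the aggregated counterpart of the paper's per-point claim (summing the paper's inequality over $a>t$ recovers a bound of the same flavor). What the mediant inequality buys the paper is that it never has to compare $S_t$ against $R_t$ at all, so there is no case analysis; what your decomposition buys is an explicit quantitative lower bound $Q_n\ge nQ_t/t$ in the middle regime and a self-contained algebraic identity to verify, at the cost of the three-way case split and the extra bound $B\ge tS_t-nR_t$. Both proofs rely on the same two structural facts (the prefix bound $Q_t\le O(t)R_t$ and triangle inequalities routed through prefix points), so the difference is in the bookkeeping rather than the underlying geometry; your write-up is complete and the algebra in the third case checks out.
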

\begin{proof}
We first note that it suffices to prove that for every $a \in \{t+1, \dots, n\}$, it holds that 
$\frac{\dist(p_t, p_a)}{2 \sum_{j=1}^t \dist(p_a, p_j)}
\leq \frac{2R_t}{Q_t}$. 
Indeed, if this holds, then
\[
\begin{aligned}
&\frac{\sum_{i=1}^n \dist(p_t, p_i)}{Q_n}\\
=&\frac{R_t + \sum_{i=t+1}^n \dist(p_t, p_i)}{Q_t + 2\sum_{i=t+1}^n \sum_{j=1}^t \dist(p_i, p_j)+\sum_{i=t+1}^n\sum_{j=t+1}^n\dist(p_i,p_j)}\\
\leq& \frac{R_t + \sum_{i=t+1}^n \dist(p_t, p_i)}{Q_t + 2\sum_{i=t+1}^n \sum_{j=1}^t \dist(p_i, p_j)}
\leq \frac{2R_t}{Q_t},
\end{aligned}
\]
where the last inequality follows from the inequality that $\frac{\sum_i c_i}{\sum_i d_i}\leq \max_i \frac{c_i}{d_i}$ for positive numbers $\{c_i\}$ and $\{d_i\}$. This will then complete the proof.

For contradiction, suppose instead that there exists some $a \in \{t+1, \dots, n\}$ with
\begin{equation} \label{equ:sugar_water}
        \frac{\dist(p_t, p_a)}{2 \sum_{j=1}^t \dist(p_a, p_j)} 
        > \frac{2R_t}{Q_t}.
    \end{equation}
We will show that this leads to a contradiction.
Using the definition of $Q_t$, we have:
\[
    Q_t \leq \sum_{i=1}^t \sum_{j=1}^t \left(\dist(p_i, p_t) + \dist(p_j, p_t)\right)
    = 2t \cdot \sum_{i=1}^t \dist(p_i, p_t)
    = 2t\cdot R_t.
\]
Combining this with \Cref{equ:sugar_water}, we can deduce:
$\sum_{j=1}^t \dist(p_a, p_j) 
    < \frac{1}{2} t \cdot \dist(p_t, p_a)$.
Let us analyze $\frac{R_t}{Q_t}$ in this scenario. We have:
\begin{align*}
        \frac{R_t}{Q_t} 
        & \geq \frac{\sum_{i=1}^t \left(\dist(p_t, p_a) - \dist(p_i, p_a)\right)}{\sum_{i=1}^t \sum_{j=1}^t \left(\dist(p_i, p_a) + \dist(p_j, p_a)\right)}  = \frac{t \cdot \dist(p_t, p_a) - \sum_{i=1}^t \dist(p_i, p_a)}{2t \cdot \sum_{i=1}^t \dist(p_i, p_a)} \\
        & > \frac{t \cdot \dist(p_t, p_a) - \frac{1}{2} t \cdot \dist(p_t, p_a)}{2t \cdot \sum_{i=1}^t \dist(p_i, p_a)}  = \frac{\dist(p_t, p_a)}{4 \sum_{i=1}^t \dist(p_i, p_a)}.
    \end{align*}
However, this contradicts \Cref{equ:sugar_water}. Therefore, no such $a$ satisfying \Cref{equ:sugar_water} can exist, and we conclude:
$\frac{\sum_{i=1}^n \dist(p_t, p_i)}{Q_n} 
    \leq \frac{2R_t}{Q_t}$. 
This completes the proof of \Cref{lem:cnt}.
\end{proof}

    Continue with \eqref{eqn:beta_t},
    using Lemma~\ref{lem:cnt} which bounds $\frac{2R_t}{Q_t}$ in terms of $\frac{\sum_{i=1}^n \dist(p_t, p_i)}{Q_n}$, we further obtain:
    \begin{equation}
        \label{eqn:beta_t_final}
    \beta_t \geq \frac{1}{2K} \cdot \frac{1-\epsilon}{1+\epsilon} \cdot \frac{\sum_{i=1}^n \dist(p_t, p_i)}{Q_n}.
    \end{equation}
We still need to bound the sum $\sum_{i=t+1}^n \frac{\hat{R}_i}{\hat{Q}_i}$ in order to bound \eqref{eqn:Pr_s_pt}. Note that
$\frac{\hat{R}_i}{\hat{Q}_i} \le  \frac{1+\epsilon}{1-\epsilon}\cdot\frac{R_i}{Q_i}$, 
and as $Q_k=\sum_{i=1}^k\sum_{j=1}^k\dist(p_i,p_j)=\sum_{i=1}^{k-1}\sum_{j=1}^{k-1}\dist(p_i,p_j)+\sum_{j=1}^k\dist(p_k,p_j)+\sum_{i=1}^{k-1}\dist(p_i,p_k)=Q_{k-1}+2R_k$, the following holds:
\[
    \frac{Q_2}{Q_n}=
    \prod_{k=3}^n \frac{Q_{k-1}}{Q_k} = \prod_{k=3}^n \left(1 - 2 \cdot \frac{R_k}{Q_k}\right).
\]
Using the fact that $(1 - x) \leq e^{-x}$, we derive:
\[
    \prod_{k=3}^n \left(1 - 2 \cdot \frac{R_k}{Q_k}\right) \leq
    e^{-2 \cdot \sum_{k=3}^n \frac{R_k}{Q_k}}.
\]
Since $1 \leq \dist(p_i, p_j) \leq \Delta$ for all $i, j$, it follows that $
    \frac{2}{\Delta n^2} \leq \frac{Q_2}{Q_n}$. 
Combining the above, we have $\frac{2}{\Delta n^2}\leq e^{-2 \cdot \sum_{k=3}^n \frac{R_k}{Q_k}}$ and thus $
    \sum_{k=3}^n \frac{R_k}{Q_k} \leq \ln(\Delta n^2)$. 
Furthermore, 
\begin{equation}
    \label{eqn:hatR_hatQ}
    \sum_{k=2}^n \frac{\hat{R}_k}{\hat{Q}_k} \leq 1 + \frac{1+\epsilon}{1-\epsilon}\cdot\ln(\Delta n^2) \leq \frac{K}{2}.
\end{equation}
Substituting this and \eqref{eqn:beta_t_final} back to \eqref{eqn:Pr_s_pt}, we obtain:
\[
   \Pr[s = p_t] \geq \beta_t \cdot \left(1 - \frac{1}{K} \cdot \sum_{i=t+1}^n \frac{\hat{R}_i}{\hat{Q}_i}\right)
    \geq \frac{1}{2K} \cdot \frac{1-\epsilon}{1+\epsilon} \cdot \frac{\sum_{i=1}^n \dist(p_t, p_i)}{Q_n} \cdot \left(1 - \frac{1}{K} \cdot \frac{K}{2}\right).
\]
This yields 
    $\Pr[s = p_t] \geq \frac{1-\epsilon}{4K(1+\epsilon)} \cdot \frac{\sum_{i=1}^n \dist(p_t, p_i)}{Q_n}$, 
which concludes the $Q_t > 0$ case.

\subparagraph{Case II: $Q_t = 0$.}
We turn to the remaining case $Q_t = 0$.
Let $j\ge t$ be the largest index that $Q_j=0$.
    Then for any $a,b \le j$,
    $\dist(p_a,p_b) = 0$.
    We have
    \begin{align}
        \Pr[s = p_t]  
        & = \beta_t \cdot \prod_{i = t+1}^n (1 - \beta_i) \nonumber  = \frac{1}{t} \cdot \prod_{i = t+1}^ j (1-\frac{1}{i})\cdot \prod_{i = j+1}^n (1 - \beta_i) \nonumber \\
        & \ge \frac 1j \cdot \left(1 - \frac1K \cdot \sum_{i=j+1}^n \frac{\hat{R}_i}{\hat{Q}_i}\right). \label{eqn:Qt_eq_zero}
    \end{align}
Since $Q_i \neq 0$ for $i \geq j + 1$, we can apply \eqref{eqn:hatR_hatQ} to get
    \begin{equation}
        \label{eqn:sum_hatR_hatQ_Qt_eq_zero}
        \sum_{i=j+1}^n \frac{\hat{R}_i}{\hat{Q}_i} \le \frac{K}2.
    \end{equation}
To also lower bound $ 1/ j$, we establish the following \Cref{lem:beta}.
\begin{lemma}
    \label{lem:beta}
    For any $t$, if $Q_t = 0$, then the following inequality holds:
\[
    \frac{1}{t} \ge \frac{\sum_{i=1}^n \dist(p_t, p_i)}{Q_n}.
\]
\end{lemma}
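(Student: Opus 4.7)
}
The plan is to unpack the hypothesis $Q_t=0$ as a metric statement, then use the triangle inequality (really just the fact that a distance-$0$ class collapses under any metric) to rewrite both the numerator and the denominator in a directly comparable form.

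First I would observe that $Q_t=\sum_{i,j\le t}\dist(p_i,p_j)=0$ forces $\dist(p_i,p_j)=0$ for every $i,j\le t$, since each summand is nonnegative. By the triangle inequality, this means that for every $i\le t$ and every $j\in[n]$,
\[
\dist(p_i,p_j)=\dist(p_t,p_j),
\]
so in particular $\dist(p_t,p_i)=0$ for $i\le t$, and therefore
\[
\sum_{i=1}^{n}\dist(p_t,p_i)=\sum_{i=t+1}^{n}\dist(p_t,p_i).
\]

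Next I would lower bound $Q_n$ by keeping only the cross terms between $\{1,\dots,t\}$ and $\{t+1,\dots,n\}$. Using the identity above,
\[
Q_n\;\ge\;\sum_{i\le t}\sum_{j>t}\dist(p_i,p_j)+\sum_{i>t}\sum_{j\le t}\dist(p_i,p_j)\;=\;2\,\sum_{i\le t}\sum_{j>t}\dist(p_t,p_j)\;=\;2t\sum_{j=t+1}^{n}\dist(p_t,p_j).
\]
Combining the two displays gives $Q_n\ge 2t\sum_{i=1}^{n}\dist(p_t,p_i)$, which immediately yields
\[
\frac{\sum_{i=1}^{n}\dist(p_t,p_i)}{Q_n}\le\frac{1}{2t}\le\frac{1}{t},
\]
as required. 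The edge case in which the numerator is $0$ is trivial, and the standing assumption $Q_n>0$ rules out any division-by-zero issue.

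There is no real obstacle here: the only subtlety is recognizing that $Q_t=0$ plus the triangle inequality lets one substitute $p_t$ for any $p_i$ with $i\le t$ inside a distance, which is what converts the ``$t$-fold overcounting'' of cross terms in $Q_n$ into the factor $2t$ that controls the bound.
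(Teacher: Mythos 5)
Your proof is correct and follows essentially the same route as the paper: use $Q_t=0$ to collapse $p_1,\dots,p_t$ onto $p_t$ via the triangle inequality, then lower bound $Q_n$ by roughly $t$ copies of $\sum_{j}\dist(p_t,p_j)$. The only difference is cosmetic — by counting both directions of the cross terms you get $Q_n\ge 2t\sum_i\dist(p_t,p_i)$ and hence the slightly stronger bound $1/(2t)$, whereas the paper's one-sided estimate gives exactly the stated $1/t$.
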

\begin{proof}
    Note that if $Q_t = 0$, then for $i\le t$, we have $\dist(p_i, p_t) = 0$. We get
\[
        Q_n \ge \sum_{i = 1}^t \sum_{j = 1}^n \dist(p_i, p_j) \ge \sum_{i = 1}^t \sum_{j = 1}^n \left(\dist(p_t, p_j) - \dist(p_i, p_t)\right) = t\cdot \sum_{j = 1}^n \dist(p_t, p_j).
\]
Thus, we have that 
$\frac{1}{t} \ge \frac{\sum_{i=1}^n \dist(p_t, p_i)}{Q_n}$. 
This finishes the proof of \Cref{lem:beta}. 
\end{proof}
Finally, Substituting \Cref{lem:beta} and \eqref{eqn:sum_hatR_hatQ_Qt_eq_zero}  
to \eqref{eqn:Qt_eq_zero}, we have
\[
        \Pr[s = p_t] 
        \ge \frac{\sum_{i=1}^n \dist(p_j,p_i)}{Q_n} \cdot \left(1-\frac12\right)
        \ge \frac{1-\epsilon}{4K(1+\epsilon)} \cdot \frac{\sum_{i=1}^n \dist(p_t, p_i)}{Q_n}.
\]
This finishes the proof of the key \Cref{lem:sampling lower bound}.
\end{proof}
 \section{Algorithms for Sliding-Window Streams}
\label{sec:sliding_window}
In this section, we present our sliding-window algorithm that approximately reports \maxcut value in general metrics.

\begin{theorem}    
    \label{thm:sliding_window}
    Suppose $(V, \dist)$ is an underlying metric space.
    There is a streaming algorithm that, given as input
    $\epsilon \in (0, 1/4), \Delta ,w \geq 1$,
    and a point set $P$ presented as a sliding-window stream with window width $w$, 
    such that if $\dist(x, y) > 0$ then $1\le \dist(x,y)\le \Delta$ for all $x, y\in P$,
    reports for each window a $(1 + \epsilon)$-approximation to its \maxcut value
    with probability $\frac23$.
The algorithm uses space $\poly(\frac{\log (w \Delta)}{\epsilon})$.
\end{theorem}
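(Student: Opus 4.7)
The plan is to instantiate the smooth histogram framework of Braverman and Ostrovsky~\cite{braverman2010effective}, which reduces the sliding-window task to two ingredients: (i) an insertion-only $(1+\epsilon)$-approximation for a suitable function $f$, and (ii) a smoothness property of $f$. \Cref{thm:insertion} already supplies an insertion-only approximator for $\maxcut$; the bulk of the work is to define $f$ and prove it smooth, and then to connect the two to obtain the final window-wise estimate.

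Following the technical overview, I would set $f(S) := \maxcut(S) + \delta |S|$ with $\delta := \epsilon/w$, so that for any active window with $|S|\le w$ the additive perturbation is at most $\epsilon$. Since the minimum positive pairwise distance is assumed to be at least $1$, whenever $\maxcut(S)>0$ we have $\maxcut(S)\ge 1$, and therefore $f(S)\in[\maxcut(S),(1+\epsilon)\maxcut(S)]$. Consequently a $(1+\epsilon')$-approximation for $f$ translates into a $(1+O(\epsilon))$-approximation for $\maxcut$ after choosing $\epsilon' = \Theta(\epsilon)$. The complementary case $\maxcut(P)=0$ on a window (i.e.\ all active pairs have distance zero) is handled separately by a tiny auxiliary sketch: I would track the most recent timestamp at which an insertion produced a positive distance to some previously active point, and output $0$ whenever that timestamp lies outside the current window. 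This sketch uses only $O(\log w)$ bits.

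The technical heart is showing that $f$ is smooth in the sense recalled in the overview: for any target $\alpha\in(0,1)$, some $\beta = \poly(\alpha)$ suffices so that $B\subseteq A$ with $f(B)\ge(1-\beta)f(A)$ implies $f(B\cup C)\ge(1-\alpha)f(A\cup C)$ for every $C$. Let $S^*\subseteq A\cup C$ attain $\maxcut(A\cup C)$ and use the restriction $S^*\cap(B\cup C)$ as a candidate cut in $B\cup C$; the loss equals the $S^*$-cut edges incident to the ``missing'' points $A\setminus B$. From $f(B)\ge(1-\beta)f(A)$ together with the monotonicity $\maxcut(B)\le\maxcut(A)$ one extracts both $|A\setminus B|\le\beta|A|$ and $\delta|A\setminus B|\le\beta f(A)$. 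I would then bound the total weight of edges incident to $A\setminus B$ via the triangle inequality and the random-cut lower bound $\maxcut(X)\ge \tfrac14\sum_{x\ne y\in X}\dist(x,y)$ applied to $X=A\cup C$, which relates per-point distance sums to the global max-cut. This chain of estimates converts the additive loss into a multiplicative one, yielding $f(B\cup C)\ge(1-\alpha)f(A\cup C)$ with $\beta$ polynomial in $\alpha$.

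With smoothness in hand, the Braverman--Ostrovsky framework maintains $O(\log(w\Delta)/\beta)$ parallel instances of the insertion-only approximator of \Cref{thm:insertion} (augmented with a trivial counter for the $\delta|\cdot|$ term) on carefully chosen suffixes whose $f$-values form a geometric progression, and returns the estimate from the oldest instance still entirely inside the active window. Smoothness guarantees that this estimate is within $(1+\alpha)$ of the true window $f$-value. Setting $\alpha = \Theta(\epsilon)$ and taking a union bound over instances boosts the per-instance success of \Cref{thm:insertion} to constant overall probability; the total space is the product of the number of histogram instances and the per-instance cost, i.e.\ $\poly(\log(w\Delta)/\epsilon)$, matching the claim. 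The main obstacle I anticipate is the smoothness proof itself, specifically the regime where $\maxcut(A)$ is small relative to $\delta|A|$ (so the additive term dominates and the multiplicative max-cut argument is weak) and the regime where $C$ is highly asymmetric with respect to $A\setminus B$ (so losing a few points in the transition from $A$ to $B$ costs disproportionately in the cut against $C$); both require careful case analysis that exploits the fact that the additive term $\delta|\cdot|$ keeps $f$ bounded away from zero.
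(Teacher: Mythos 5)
Your overall architecture is exactly the paper's: perturb the objective to $f(S)=\maxcut(S)+\delta|S|$ so that the all-zero-distance degeneracy disappears, prove $(\alpha,\beta)$-smoothness of $f$ with $\beta=\Theta(\alpha)$, plug \Cref{thm:insertion} (amplified by median-of-copies, then union-bounded over histogram instances) into the Braverman--Ostrovsky reduction, and threshold the output to recover the $\maxcut(W)=0$ case. The choice $\delta=\epsilon/w$ versus the paper's $\epsilon/n$ is immaterial.

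There is, however, a genuine gap in the step you yourself identify as the technical heart. You claim that from $f(B)\ge(1-\beta)f(A)$ and monotonicity of $\maxcut$ "one extracts $|A\setminus B|\le\beta|A|$." This does not follow: those two hypotheses only yield $\delta|A\setminus B|\le\beta f(A)=\beta(\maxcut(A)+\delta|A|)$, and when $\maxcut(A)\gg\delta|A|$ (the typical regime) this gives no control on $|A\setminus B|/|A|$ at all. The cardinality bound is nonetheless true and is genuinely needed later (when you bound the cut between $A\setminus B$ and an adversarial $C$, the triangle inequality produces a term of the form $\tfrac{|A\setminus B|}{|B|}\sum_{y\in B}\dist(y,c)$ that must be small). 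The paper obtains it by a different route: a random-cut superadditivity bound $\maxcut(A)\ge\maxcut(B)+\tfrac14\cut(A\setminus B,A\setminus B)+\tfrac12\cut(A\setminus B,B)$ turns the hypothesis into $\cut(A\setminus B,B)\le O(\beta)\cut(B,B)$, and the triangle-inequality lower bound $\cut(A\setminus B,B)\ge\tfrac{|A\setminus B|}{2|B|}\cut(B,B)$ then forces $|A\setminus B|\le O(\beta)|B|$ (with a separate easy case when $\cut(B,B)=0$). Without some such cut-based argument your chain of estimates does not close. A second, smaller issue: your auxiliary $O(\log w)$-bit sketch for detecting $\maxcut(W)=0$ is not implementable as stated, since checking "positive distance to some previously active point" requires access to all active points; you either need to observe that zero distance is transitive in a metric (so checking consecutive pairs suffices), or simply use the thresholding of the $f$-estimate at $1/2$, which already separates $f(W)\le\epsilon$ from $f(W)\ge1$.
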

\begin{proof}
The plan is to employ the smooth histogram framework~\cite{braverman2010effective},
which reduces the sliding-window problem to the insertion-only setting, provided that the insertion-only algorithm is \emph{smooth}.
The definition of smoothness is restated in \Cref{def:smooth}.
For two sequences $A, B$, $B \subseteq_r A$ denotes that $B$ is a prefix of $A$.
\begin{definition}[\cite{braverman2010effective}]
\label{def:smooth}
    Let $w \geq 1$ be the window size, and $(V, \dist)$ is an underlying metric such that $\forall x \neq y, 1 \leq \dist(x, y) \leq \Delta$.
    A function $f : 2^V \to \mathbb{R}$ is $(\alpha, \beta)$-smooth if the following holds for any $A \subseteq V$.
    \begin{enumerate}
        \item $f(A) \geq 0$.
        \item $f(A) \geq f(B)$ for $B \subseteq_r A$.
        \item $f(A) \leq \poly(|A|) \cdot \Delta$.
        \item For any $0 < \epsilon < 1$, there exists $\alpha = \alpha(\epsilon, f)$ and $\beta = \beta(\epsilon, f)$ such that
        \begin{itemize}
            \item $0 < \beta \leq \alpha < 1$;
            \item if $B \subseteq_r A$ and $(1 - \beta)f(A) \leq f(B)$, then $(1 - \alpha)f(A \cup C) \leq f(B \cup C)$ for any $C \subseteq V$.
        \end{itemize}
    \end{enumerate}
\end{definition}

The following reduction was established in \cite{braverman2010effective},
which reduces sliding-window algorithm design to the insertion-only case,
provided that the insertion-only algorithm is smooth.
\begin{lemma}[\cite{braverman2010effective}]
    \label{lemma:smooth}
    Let $w \geq 1$ be the window size, and
    let $f$ be an $(\alpha, \beta)$-smooth function.
    If there exists an algorithm $\mathcal{A}$ that maintains a $(1 + \hat{\varepsilon})$-approximation of $f$ on some input $D$ presented as an insertion-only stream using space $g(\hat{\varepsilon})$,
    with failure probability at most $\delta$,
    then there exists an algorithm $\mathcal{A}'$ that maintains a $(1 + \alpha + \hat{\varepsilon})$-approximation of $f$ on each sliding-window 
    with failure probability at most $\frac{\delta\log n}{\beta}$,
    using space
    $O\left(\frac{1}{\beta}(g(\hat{\varepsilon}) + \log (wM)) \log (wM)\right)$.
\end{lemma}

Our main technical lemma in this section is the following smoothness bound for \maxcut.
Let $S$ be a subset of $P$.
Here, we do not directly analyze the smoothness of $f(S) = \maxcut(S)$. Instead, we add a term $|S| \cdot \epsilon / n$ to the \maxcut value,
and this term is crucially needed.
In fact, without this additional term, one can show that $f(S) := \maxcut(S)$ is \emph{not} smooth (as we mention in \Cref{sec:tech_overview}).
The proof of \Cref{lem:f} can be found in \Cref{sec:proof_smoothness}.
\begin{lemma}[Smoothness of \maxcut]
\label{lem:f}
Let $f : V \to \mathbb{R}$ be $f(S) := \maxcut(S) + |S| \cdot \epsilon / n $.
Then $f$ is $(\epsilon,\frac1{64}\epsilon)$-smooth.\end{lemma}

\paragraph{Concluding \Cref{thm:sliding_window}.}
We start with describing the algorithm.
Let $f(S) := \maxcut(S) + |S| \cdot \epsilon / n$.
Note that the smoothness bound in \Cref{lem:f} is only for $f$ and is not for \maxcut.
Hence, we need to devise an insertion-only algorithm $\mathcal{A}$ to approximate $f$.
To this end, we use \Cref{thm:insertion} in a black-box way to obtain a $(1 + \epsilon)$-approximation for \maxcut.
To reduce the failure probability, we independently run the algorithm 
$O(\log \frac{\log n}{\epsilon})$ times and take the median of the resulting values 
as the final $(1+\epsilon)$-approximation. 
This ensures that the overall failure probability is at most 
$\frac{\epsilon}{192 \log n}$.
The additional term $|S| \cdot \epsilon / n$ can be computed exactly in sliding window using $\tilde O(1)$ space.
Combining the two terms yield the desired $\mathcal{A}$.
Clearly, $\mathcal{A}$ gives a $(1+\epsilon)$-approximation of $f$.
From \Cref{lem:f}, $f$ is $(\epsilon,\frac1{64}\epsilon)$-smooth, 
so we can apply \Cref{lemma:smooth} to turn $\mathcal{A}$ into $\mathcal{A}'$ for sliding-window setting, where $\alpha = \epsilon$ and $\beta := \frac1{64}\epsilon$.
Finally, let $t$ be the value returned by $\mathcal{A}'$,
and if $t > 0.5$ we return $t$ and $0$ otherwise.

Next, we show that the approximation ratio is $1+4\epsilon$.
Specifically, let $W$ be the point set of an arbitrary sliding window, and we analyze the ratio for approximating $\maxcut(W)$.
We start with the case $\maxcut(W)>0$, and we argue that in this case the algorithm must return $t$ as the return value instead of $0$.
To see this, since the minimum non-zero distance in the metric is $1$, we get $\maxcut(W) \geq 1$ and hence $f(W) > \maxcut(W)\ge 1$.
Since $t$ is a $(1+2\epsilon)$-approximation of $f(W)$, $t$ is still greater than $0.5$,
and this implies that the return value of the algorithm is $t$.
Now, to further bound the ratio of $t$,
note that $|W| \cdot \epsilon / n \leq \epsilon$ and recall that we argued $\maxcut(W) \geq 1$,
we know that $f(W) = \maxcut(W) + |W|\cdot \epsilon / n$ is $(1 + \epsilon)$-approximation to $\maxcut(W)$.
Since the return value $t$ is a $(1 + 2\epsilon)$-approximation to $f(S)$ by \Cref{lemma:smooth}, we conclude that the return value $t$ is $(1+4\epsilon)$-approximation of $\maxcut(W)$.

For the remaining case of $\maxcut(W)=0$, we again use that $f(W)= |W|\cdot \epsilon / n\le \epsilon$.
This implies that $(1+2\epsilon)$-approximation of $f(W)$ is less than $0.5$,
therefore our algorithm correctly returns $0$.

By \Cref{lemma:smooth},
the failure probability is at most $\frac{\epsilon}{192 \log n}\cdot \frac{\log n}{\beta}\le \frac13$.
\end{proof}

\subsection{Proof of \Cref{lem:f}: The Smoothness of \maxcut}
\label{sec:proof_smoothness}

Before we proceed, we need the following mathematical facts about \maxcut function.
\begin{fact}
\label{fact:mc}
$\frac{1}{4} \cut(S, S) \leq \maxcut(S) \leq \frac{1}{2} \cut(S, S)$.
\end{fact}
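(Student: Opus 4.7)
The plan is to prove the two inequalities separately, with the upper bound being a direct algebraic decomposition of $\cut(S,S)$ and the lower bound following from a standard probabilistic argument (uniform random bipartition).

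For the upper bound, I would fix any $T \subseteq S$ and decompose
\[
\cut(S,S) = \cut(T,T) + \cut(T, S\setminus T) + \cut(S\setminus T, T) + \cut(S\setminus T, S\setminus T).
\]
By the symmetry of $\dist$, we have $\cut(T, S\setminus T) = \cut(S\setminus T, T)$, and the other two terms are non-negative since distances are non-negative. Therefore $\cut(S,S) \geq 2 \cdot \cut(T, S\setminus T)$, i.e.\ $\cut(T, S\setminus T) \leq \frac{1}{2}\cut(S,S)$. Taking the maximum over $T \subseteq S$ yields $\maxcut(S) \leq \frac{1}{2}\cut(S,S)$.

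For the lower bound, I would use the probabilistic method: let $T$ be a random subset of $S$ obtained by including each $x \in S$ independently with probability $\tfrac{1}{2}$. For each ordered pair $(x,y)$ with $x \neq y$, the probability that $x \in T$ and $y \in S \setminus T$ is exactly $\tfrac{1}{4}$, so
\[
\E[\cut(T, S\setminus T)] = \sum_{x \in S}\sum_{y \in S} \Pr[x \in T,\, y \in S\setminus T]\cdot \dist(x,y) = \tfrac{1}{4}\cut(S,S).
\]
Hence there exists a (deterministic) choice of $T \subseteq S$ with $\cut(T, S\setminus T) \geq \tfrac{1}{4}\cut(S,S)$, which gives $\maxcut(S) \geq \tfrac{1}{4}\cut(S,S)$.

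I do not foresee any substantive obstacle: both inequalities are classical and purely combinatorial, relying only on symmetry/non-negativity of $\dist$ for the upper bound and a one-line expectation computation for the lower bound.
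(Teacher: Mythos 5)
Your proof is correct and is essentially the standard argument the paper invokes: the paper simply cites the well-known fact that $\tfrac{1}{2}E \le \maxcut(S) \le E$ (for $E$ the sum of pairwise distances) and notes the extra factor of $2$ from double-counting in $\cut(S,S)$, whereas you supply the self-contained derivation of that well-known fact via the decomposition for the upper bound and the uniform random bipartition for the lower bound.
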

\begin{proof}
Let $E$ be the sum of all pairwise distances of $S$, then it is well known that $\frac12 E \leq \maxcut(S) \leq E$.
Here, we have another factor of $\frac12$ because the distance $\dist(x, y)$ between every two points $x, y$ is considered twice in $\cut(S, S)$.
\end{proof}

\begin{fact}
    For $B \subseteq A \subseteq V$, we have 
    \begin{align*}
        \maxcut(B)+\frac14\cdot\cut(A\setminus B, A\setminus B) + \frac12\cdot \cut(A\setminus B, B)\le \maxcut(A).
    \end{align*}
\end{fact}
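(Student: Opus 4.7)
The plan is to exhibit a specific cut of $A$ whose value is at least the claimed lower bound, and the natural candidate is obtained by taking an optimal cut of $B$ and extending it to $A\setminus B$ by an independent random $\pm$ assignment. Concretely, let $T^\star \subseteq B$ attain $\cut(T^\star, B\setminus T^\star) = \maxcut(B)$, and let $X \subseteq A\setminus B$ be a uniformly random subset (each element included independently with probability $1/2$). Then $(T^\star \cup X,\; A\setminus (T^\star\cup X))$ is a cut of $A$, and I will estimate its expected value using linearity of expectation.

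The cut value decomposes into three groups of contributing pairs: pairs inside $B$, mixed pairs between $B$ and $A\setminus B$, and pairs inside $A\setminus B$. The first group contributes exactly $\cut(T^\star, B\setminus T^\star) = \maxcut(B)$, which is deterministic. For a mixed pair $(x,y)$ with $x \in B$ and $y \in A\setminus B$, the side of $y$ is independent and uniform, so the pair is cut with probability $1/2$; summing, the expected contribution equals $\tfrac12\cut(B, A\setminus B) = \tfrac12\cut(A\setminus B, B)$. For a pair $(x,y)$ with both in $A\setminus B$, the two endpoints are independently placed, so the pair is cut with probability $1/2$; since $\cut(A\setminus B, A\setminus B)$ counts each unordered pair twice, the expected contribution equals $\tfrac14 \cut(A\setminus B, A\setminus B)$.

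Adding the three contributions, the expected value of the random cut of $A$ is exactly
\[
\maxcut(B) + \tfrac12 \cut(A\setminus B, B) + \tfrac14 \cut(A\setminus B, A\setminus B).
\]
Since $\maxcut(A)$ is at least the value of any particular cut, and in particular at least the expectation over the random extension, the desired inequality follows. There is no real obstacle here beyond careful bookkeeping of the factors $\tfrac12$ vs.\ $\tfrac14$, which arise because $\cut(A\setminus B, A\setminus B)$ double-counts unordered pairs while $\cut(A\setminus B, B)$ does not.
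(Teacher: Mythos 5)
Your proof is correct and is essentially the same as the paper's: both fix an optimal cut of $B$, extend it to $A\setminus B$ by independent fair coin flips, compute the expected cut value by linearity of expectation (the paper just writes the mixed contribution as two terms $\tfrac12\cut(A\setminus B, B\setminus B') + \tfrac12\cut(B', A\setminus B)$ rather than the combined $\tfrac12\cut(A\setminus B, B)$), and conclude via $\maxcut(A) \ge \E[\cut(S, A\setminus S)]$. The bookkeeping of the $\tfrac12$ versus $\tfrac14$ factors matches the paper's conventions exactly.
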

\begin{proof}
    Let $B^\prime\subseteq B$ be the set such that $\sum_{x\in B^\prime}\sum_{y\in B\setminus B^\prime}\dist(x,y) = \maxcut(B)$.
    Let $S$ be a random point set defined as follows.
    Initially, $S = B'$.
    Then, for every point in $A\setminus B$,
    we independently put it into $S$ with probability $\frac{1}{2}$.
Now $S\cap B = B'$ and $(A \setminus S)\cap B = B \setminus B'$.
\[
\begin{aligned}
    \mathbb{E}[\cut(S,A \setminus S)] 
        &= \mathbb{E}[\cut(S\cap B, (A \setminus S)\cap B)] 
         + \mathbb{E}[\cut(S\setminus B, (A \setminus S)\cap B)]\\
        &\quad + \mathbb{E}[\cut(S\cap B, (A \setminus S)\setminus B)] 
         + \mathbb{E}[\cut(S\setminus B, (A \setminus S)\setminus B)] \\
        &= \cut(B', B\setminus B') 
         + \tfrac12 \cdot \cut(A\setminus B, B\setminus B') \\
        &\quad + \tfrac12 \cdot \cut(B', A\setminus B) 
         + \tfrac14 \cdot \cut(A\setminus B, A\setminus B)
\end{aligned}
\]

By the optimality of $\maxcut(A)$, we have $\maxcut(A) \geq \E[\cut(S, A \setminus S)]$.
The proof is finished by combining this with the above upper bound of $\E[\cut(S, A\setminus S)]$.
\end{proof}

Let $\alpha := \epsilon, \beta := \frac1{64}\epsilon$. 
To verify the smoothness of $f$ (\Cref{def:smooth}), 
we need to show that if $B \subseteq_r A$ and $(1-\beta)f(A) \leq f(B)$, then $(1-\alpha)f(A \cup C) \leq f(B \cup C)$ for all $C \subseteq P \setminus A$.

If $B = \emptyset$, then $f(A) = f(B) = 0$, so $A = B =\emptyset$.
We can directly get $f(A \cup C) = f(B \cup C)$.
In the following, we assume that $B\ne \emptyset$.
To begin with, we assume that $\cut(B, B) > 0$.
By the premises of $(1-\beta)f(A)\le f(B)$, we have
\[(1-\beta)\left(f(B)+|A\setminus B|\cdot \epsilon / n + \frac14\cut(A\setminus B, A\setminus B)+\frac12\cut(A\setminus B, B)\right)\le f(B)\]
Note that $1-\beta \ge \frac12$, we have
$\frac18\cut(A\setminus B, A\setminus B)+\frac14\cut(A\setminus B, B)\le \beta f(B)$.  
As $|B| \cdot \epsilon / n \le n \cdot \epsilon / n \le 1\le \maxcut(B)$,
we get 
\[
\frac12\cut(A\setminus B, A\setminus B)+\cut(A\setminus B, B)\le 4\beta f(B) = 4\beta\left(\maxcut(B) + \epsilon / n\cdot |B|\right)\le 4\beta \cut(B, B)
\]
This further implies
\begin{equation}\label{eq:part2}
\cut(A\setminus B, A\setminus B)\le 8\beta \cut(B, B)
\end{equation}
and 
\begin{equation}\label{eq:part1}
\cut(A\setminus B, B)\le 4\beta \cut(B, B)
\end{equation}

Now, for every $u\in A\setminus B$, we have
\begin{equation*}
\cut(B, B)\le \sum_{x\in B}\sum_{y\in B}(\dist(x,u)+\dist(u,y))=2|B|\cdot \sum_{y\in B}\dist(u,y)
\end{equation*}
Summing all such $u \in A \setminus B$, we get $ \cut(A\setminus B, B)\ge \frac{|A\setminus B|}{2|B|}\cut(B, B)$. 
Combining this inequality with \Cref{eq:part1},
we have
\begin{equation}
\label{eqn:frac_AmB_B}
 \frac{|A\setminus B|}{|B|}\le 8\beta
\end{equation}
Consider the case that $\cut(B, B) = 0$.
Assume that $\cut(A, A) > 0$.
Then $(1-\beta)f(A)\ge (1-\beta) > n\cdot \epsilon / n \ge f(B) = |B| \cdot \epsilon / n$.
This is impossible, so $\cut(A, A) = 0$.
From $(1-\beta)|A|\le |B|$,
we can still obtain \Cref{eq:part2}, \Cref{eq:part1} and \Cref{eqn:frac_AmB_B}.

Next, fix some $c \in C$. By triangle inequality,
$
0\le \sum_{x\in A\setminus B}\sum_{y\in B}[\dist(x,y)+\dist(y,c)-\dist(x,c)]
$
so
$
|B|\cdot \sum_{x\in A\setminus B}\dist(x,c)\le \cut(A\setminus B, B)+|A\setminus B|\sum_{y\in B}\dist(y,c). 
$
Combining this inequality with \Cref{eq:part1}, we get
\begin{equation}
\label{eqn:B_AmB}
    |B|\cdot \sum_{x\in A\setminus B}\dist(x,c)\le 4\beta \cut(B, B)+|A\setminus B|\sum_{y\in B}\dist(y,c)
\end{equation}
On the other hand,
\begin{equation}
\label{eqn:cut_BB}
    \cut(B, B)\le \sum_{x\in B}\sum_{y\in B}(\dist(x,c)+\dist(y,c))=2|B|\cdot \sum_{x\in B}\dist(x,c)
\end{equation}
Combining \eqref{eqn:B_AmB} and \eqref{eqn:cut_BB}, it holds that
\[
|B|\cdot \sum_{x\in A\setminus B}\dist(x,c)\le 8\beta|B|\sum_{x\in B}\dist(x,c)+|A\setminus B|\sum_{x\in B}\dist(x,c),
\]
which implies that
\[
\sum_{x\in A\setminus B}\dist(x,c)\le 8\beta \sum_{x\in B}\dist(x,c)+\frac{|A\setminus B|}{|B|}\sum_{x\in B}\dist(x,c)\le 16\beta \sum_{x\in B}\dist(x,c), 
\]
where the second inequality is from \eqref{eqn:frac_AmB_B}.
Summing over $c \in C$, we obtain 
$
\cut(A\setminus B, C)\le 16\beta\cut(B, C).
$ 
This together with \eqref{eq:part2} and \eqref{eq:part1} implies
\[
\begin{aligned}
\cut(A\setminus B, A\cup C)
&\leq \cut(A \setminus B, A \setminus B) + \cut(A \setminus B, B) + \cut(A \setminus B, C) \\
&\leq 12\beta \cut(B, B) + 16\beta \cut(B, C) \le 16\beta\cut(B\cup C,B\cup C) \\
&= \frac14\alpha\cut(B\cup C,B\cup C)  \leq \alpha \maxcut(B\cup C),
\end{aligned}
\]
where the last inequality follows from \eqref{fact:mc}.
This implies
\[
    (1-\alpha)[\cut(A\setminus B, A\cup C)+\maxcut(B\cup C)]\le \maxcut(B\cup C)
\]
From the definition of \maxcut, 
there exists $S\subseteq A\cup C$,
such that $\maxcut(A\cup C) = \cut(S, (A\cup C)\setminus S)$.
Write $S' := (A\cup C)\setminus S$.
We have
\[
\begin{aligned}
\maxcut(A \cup C)
    &= \cut(S, S') \\
    &= \cut(S\cap (B \cup C), S'\cap (B \cup C)) + \cut(S\cap (B \cup C), S'\cap (A\setminus B)) \\
    &\quad + \cut(S\cap (A\setminus B), S') \\
    & \le  \maxcut(B\cup C) + \cut(A\cup C, S'\cap (A\setminus B)) + \cut(S\cap (A\setminus B), A\cup C) \\
    & \le \maxcut(B\cup C) + \cut(A\setminus B, A\cup C)\\
    & \le (1+\alpha)\maxcut(B\cup C)
\end{aligned}
\]

Note that $
    (1-\alpha)\cdot |A|\le (1-8\beta)\cdot (1+8\beta)|B|\le |B|$. 
Hence, 
$
    (1-\alpha)f(A\cup C)\le f(B\cup C). 
$
This verifies \Cref{def:smooth} and finishes the proof of \Cref{lem:f}.\qed

\section{Lower Bound for Dynamic Streams}\label{app:lowerbound}
In this section, we establish the lower bound for the problem in the dynamic streaming setting.
We use the same model as in our upper bounds, specifically the one defined in~\Cref{def:model}.

\begin{theorem}
    \label{thm:lb}
    For any streaming algorithm $\mathcal{A}$
    and sufficiently large integer $n$,
    there exists a metric space and dynamic point (ID) stream, both of size $O(n)$, 
    such that
$\mathcal{A}$ must use $\Omega(n^{1 / 3})$ space to compute $O(\Delta / n^{1 / 3})$-approximation of $\maxcut$ with probability at least $0.55$ 
    where $\Delta$ is the aspect ratio of the dataset.
\end{theorem}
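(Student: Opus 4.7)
The plan is to establish the lower bound via Yao's minimax principle applied to a carefully chosen hard distribution, combined with an indistinguishability argument that forces any $o(n^{1/3})$-space algorithm to miss the information required for a $\poly(n)$-approximation. I would first construct the adversarial metric: set $N = \Theta(n^{1/3})$ and $\Delta = \poly(n)$ (large enough that the ratio below exceeds $\poly(n)$), and partition the $n$ IDs into one special cluster $S = \{s_1, \ldots, s_N\}$ together with roughly $(n-N)/N$ filler clusters each of size $N$. Distances are: $\Delta$ between distinct clusters, $1$ within each filler cluster, and within $S$ every pair at distance $1$ \emph{except} that a hidden outlier $o \in S$ is at distance $k \in \{1, \Delta\}$ from each of the remaining $|S|-1$ points. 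The stream inserts all $n$ IDs in a fixed order and then deletes every ID not in $S$. The two key inequalities, $\maxcut(S) \le O(N^2)$ when $k = 1$ versus $\maxcut(S) \ge (N-1)\Delta$ when $k = \Delta$, yield a ratio $\Delta/N = \omega(\poly(n))$, so any $\poly(n)$-approximation must correctly decide the value of $k$.

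I would then take $o$ uniform over $S$ and $k$ uniform over $\{1, \Delta\}$ as the hard distribution and invoke Yao's minimax to reduce to deterministic algorithms. The crux is that the only query whose answer depends on $(o, k)$ is a pair $(o, x)$ with $x \in S \setminus \{o\}$ (which returns $k$); every other distance query returns a value determined purely by the cluster structure and is therefore independent of $(o, k)$. Let $V_S \subseteq S$ collect all $S$-IDs that ever appear as an endpoint of some query the algorithm issues. If $o \notin V_S$, then the algorithm's full transcript---memory trajectory, query sequence, and query answers---is identical under $k = 1$ and $k = \Delta$, so its output is the same in both cases and it necessarily errs on one of them. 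Consequently the success probability is bounded by $\tfrac{1}{2} + \tfrac{1}{2}\cdot \Pr_o[o \in V_S] = \tfrac{1}{2} + \tfrac{|V_S|}{2N}$, and achieving success probability $\ge 0.55$ forces $|V_S| \ge 0.1\,N = \Omega(n^{1/3})$.

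The main obstacle is the final step: converting the covering bound $|V_S| = \Omega(n^{1/3})$ into a space lower bound $s = \Omega(n^{1/3})$. The delicate part is that the algorithm can recycle memory, storing different $S$-IDs at different times so that no single snapshot contains all of $V_S$. I would argue that each $S$-ID in $V_S$ must, at the moment of its query, be supplied to the oracle as an explicit label from $[n]$; hence via an encoding/information-theoretic accounting that charges each ID in $V_S$ to some identifiable bits of memory forced to distinguish it from the other stream labels, one obtains $|V_S| = O(s)$. This argument remains valid in the stronger oracle model (where any pair that ever appeared may be queried), since producing an ID for the oracle still requires encoding it in memory. Combining with the previous step yields $s = \Omega(n^{1/3})$ and completes the proof.
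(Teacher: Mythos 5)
Your construction and the reduction via Yao's principle match the paper's in spirit, and your observation that a $\poly(n)$-approximation forces the algorithm to query the pair $(o,x)$ for the hidden outlier $o$ is exactly the paper's event $\mathcal{E}$. But the step you yourself flag as the "main obstacle" --- converting $|V_S|=\Omega(n^{1/3})$ into a space bound --- is a genuine gap, and the accounting you sketch does not close it. The claim that "producing an ID for the oracle still requires encoding it in memory" is false when the IDs are predictable: if the labels are, say, a known function of stream position, an algorithm can enumerate them with $O(\log n)$ bits of memory and, in the stronger oracle model, query every pair of the surviving cluster during the deletion stage, computing the Max-Cut exactly. Memory does not have to "pay" for an ID unless the ID contains information the algorithm could not otherwise reconstruct. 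The paper supplies exactly this missing ingredient: each point's ID carries an independent uniformly random bit $a_{i,j}$, and a query with an ID that never appeared in the stream causes immediate failure. A deterministic algorithm in memory state $I$ that issues $2m$ queries to IDs it was not just handed must guess $2m$ fresh random bits; all guesses succeed with probability $2^{-2m}$, and a union bound over the $2^m$ possible states $I$ leaves probability $2^{-m}$. This is the mechanism that bounds the number of "fresh" IDs an $m$-bit algorithm can ever touch by roughly $2m$, after which the randomness of $i^*,j^*$ gives $\Pr[\mathcal{E}]\le 0.1$.

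Two further points in your construction would also need repair even granting some version of the accounting. First, your special cluster $S$ is fixed and your insertion order is unspecified; if the points of $S$ arrive contiguously, the algorithm receives all their IDs for free and can examine $S$ exhaustively during insertion. The paper inserts points column by column precisely so that no row (cluster) is ever contiguous, and additionally randomizes which row is special ($i^*$ uniform over $n^{2/3}$ rows) so that the few affordable cross-column queries in the insertion stage are unlikely to land in the special row. Second, your success-probability bound $\tfrac12+\tfrac{|V_S|}{2N}$ treats $V_S$ as a fixed set, but $V_S$ is itself a random variable depending on the oracle's answers and the stream randomness; the paper avoids this by conditioning on the memory state and the auxiliary (outlier-free) distance function $\dist'$, showing the query sequence is determined before $\mathcal{E}$ can occur.
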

To prove \Cref{thm:lb}, we will first establish the following lemma.

\begin{lemma}
    \label{lem:lb}
    There exists an input distribution $\mathcal{D}$, over the metric space and a dynamic point (ID) stream in this metric, both of size $O(n)$,
    such that for any deterministic algorithm $\mathcal{A}$,
    when its input is sampled from $\mathcal{D}$,
$\mathcal{A}$ must use $\Omega(n^{1 / 3})$ space to compute $O(\Delta / n^{1 / 3})$-approximation of $\maxcut$ with probability at least $0.55$.
\end{lemma}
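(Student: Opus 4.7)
My plan is to apply Yao's minimax principle: construct a distribution $\mathcal{D}$ over inputs on which every deterministic algorithm using $o(n^{1/3})$ space fails with probability at least $0.45$ to give a $\poly(n)$-approximation to $\maxcut$. Following the technical overview, $\mathcal{D}$ realizes two ``hard instances'' that agree on every stream operation and every pairwise distance except for a single hidden parameter $k \in \{1, \Delta\}$.

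\textbf{Construction.} Set $m := \lceil n^{1/3} \rceil$ and $\Delta := n^c$ for a sufficiently large constant $c$. Designate a ``special cluster'' $S$ of $m$ points and a ``dummy cluster'' of the remaining $n - m$ points. All pairs within $S$ or within the dummies have distance $1$, and all pairs across the two clusters have distance $\Delta$, except that the $m - 1$ distances from a distinguished outlier $o \in S$ to every other $S$-member all equal $k$. The distribution $\mathcal{D}$ picks $o$ uniformly from $S$, $k$ uniformly from $\{1, \Delta\}$, and uniformly permutes the ID assignments of the $n$ points, so the algorithm cannot recognize $o$ or $S$ from IDs alone. The stream first inserts all $n$ points in some canonical ID-order, then deletes the $n - m$ dummy IDs, leaving $S$ as the final data set. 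A direct computation gives $\maxcut(S) \leq m^2/4$ when $k = 1$ (since $\cut(A, S \setminus A) = |A|(m - |A|) \leq m^2/4$ when all distances equal $1$) and $\maxcut(S) \geq (m - 1) \Delta$ when $k = \Delta$ (by placing $o$ alone on one side), a multiplicative gap of $\Omega(\Delta/m)$ that exceeds any prescribed $\poly(n)$ approximation for suitably large $c$.

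\textbf{Lower bound argument.} The key structural observation is that the parameter $k$ appears in the distance oracle only via the $m - 1$ pairs $\{(o, x) : x \in S \setminus \{o\}\}$; every other pair has distance $1$ or $\Delta$ independent of $k$. Hence any algorithm that distinguishes $k = 1$ from $k = \Delta$ with advantage better than random must, at some point during execution, issue a distance query $\dist(o, x)$ for some $x \in S \setminus \{o\}$. For such a query, the specific $\log n$-bit ID of $o$ must be derivable from $\mathcal{A}$'s $s$-bit memory simultaneously with some other $S$-member's ID. I will argue that, so long as $\mathcal{A}$ has not yet issued a query involving $o$, its memory state is statistically independent of $o$ (viewed as a uniform random element of $S$ under the ID permutation), since every earlier query returned either $1$ or $\Delta$ with a value that does not reveal $o$. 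Consequently the $O(s/\log n)$ IDs that can be produced from memory at any given time contain $o$ with probability at most $O(s/(m \log n))$. A careful inductive bookkeeping of the outlier-involving queries $\mathcal{A}$ can attempt then shows that the outlier is detected with probability $O(s/m)$ overall; combined with the random choice of $k$, the success probability is at most $1/2 + o(1) < 0.55$ once $s = o(n^{1/3})$.

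\textbf{Main obstacle.} The most delicate step is rigorously justifying the ``independence until first $o$-involving query'' property in the presence of the deletion phase and adaptive querying. After all deletions, the valid IDs collapse to exactly $S$, so $\mathcal{A}$ might try to use this information to focus subsequent queries on $S$-members; however, leveraging it still requires $\mathcal{A}$ to have recorded enough of $S$'s IDs in memory. I plan to address this via an encoding/counting argument: since $S$'s IDs form a random $m$-subset of a universe of size $\Theta(n)$ under the ID permutation, identifying a constant fraction of $S$'s IDs requires $\Omega(m \log(n/m)) = \Omega(n^{1/3} \log n)$ bits of state. Below this threshold, only $o(m)$ of $S$'s IDs are effectively ``known'' to the algorithm at any moment, and the uniformly random $o$ escapes all queries with probability $1 - o(1)$. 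Formalizing this coupling, and tightly handling the interaction between adaptive pre-deletion queries and post-deletion queries to the oracle, is where I expect the bulk of the technical work.
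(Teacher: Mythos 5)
Your high-level plan matches the paper's: Yao's principle, two instances that differ only in a hidden parameter $k\in\{1,\Delta\}$ attached to an outlier in the surviving cluster, and the claim that distinguishing them forces a specific oracle query. However, there is a genuine gap in the construction that breaks the lower bound in the model the paper works in. Your IDs are just a uniformly permuted labelling of the $n$ points, so \emph{every} ID in the universe is valid and appears in the stream. The model allows the algorithm to query the distance between any two IDs that have ever appeared (not only those currently stored), and places no bound on the number of queries. Hence an $O(\log n)$-space algorithm can, at the end of the stream, simply iterate over all $\binom{n}{2}$ ID pairs, query each, and check whether some point is at distance $\Delta$ from every other point; this detects the outlier and determines $k$ exactly. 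Your key step --- ``the specific $\log n$-bit ID of $o$ must be derivable from $\mathcal{A}$'s $s$-bit memory'' --- is therefore false: the algorithm never needs to remember or derive IDs, it can enumerate them. The counting argument you propose for identifying the random $m$-subset $S$ does not help, because the enumeration attack queries all pairs in the full universe and never needs to isolate $S$.

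The paper closes exactly this hole by padding the ID space: each point $p_{i,j}$ gets ID $(i,j,a_{i,j})$ with $a_{i,j}$ an independent uniform bit, and querying an ID that never appeared causes immediate failure. A query to a ``guessed'' (non-remembered) ID is then invalid with probability $1/2$, so after deduplication the first $2m$ guessed IDs are all valid with probability $2^{-2m}$; a union bound over the $2^{m}$ possible memory states shows that, with high probability, no memory state lets the algorithm survive more than $O(m)$ guesses. Only then does the uniform choice of the special row/point give the $O(m/n^{2/3})$ and $O(m/n^{1/3})$ hitting probabilities. (The paper also uses $n^{2/3}$ clusters rather than your two, so that the surviving cluster itself is hidden during the insertion phase, but that is secondary.) Even if you retreat to the weaker oracle model where only currently stored IDs may be queried, your bound of ``$O(s/\log n)$ IDs producible from memory'' is not justified, since the model allows arbitrary encodings of IDs in memory; some analogue of the paper's union bound over memory states together with randomized IDs is still needed. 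Without an invalid-ID (or equivalent anti-enumeration) mechanism, the argument cannot be completed.
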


We first show that \Cref{lem:lb} directly implies our lower bound result, namely \Cref{thm:lb}. The transition from \Cref{lem:lb} to \Cref{thm:lb} follows from Yao’s minimax principle~\cite{DBLP:conf/focs/Yao77}. For completeness, we restate the argument below.

\begin{proof}[Proof of \Cref{thm:lb}]
For any randomized algorithm $\mathcal{A}$, let $\sigma$ denote the distribution of its random seed.
Define $\mathcal{E}$ as the event that, on input $T$ sampled from $\mathcal{D}$ and seed $r \sim \sigma$, 
algorithm $\mathcal{A}$ outputs an $O(\Delta / n^{1 / 3})$-approximation using $o(n^{1/3})$ bits of space. 
We can write
\[
    \mathbb{P}_{T \sim \mathcal{D},\, r \sim \sigma}[\mathcal{E}]
    = \mathbb{E}_{r \sim \sigma}\Bigl[\mathbb{P}_{T \sim \mathcal{D}}[\mathcal{E} \mid r]\Bigr].
\]
Fixing the random seed $r$ turns $\mathcal{A}$ into a deterministic algorithm. 
By \Cref{lem:lb}, for any fixed $r$, we have 
\(\mathbb{P}_{T \sim \mathcal{D}}[\mathcal{E} \mid r] \le 0.55\).
Thus,
\[
    \mathbb{P}_{T \sim \mathcal{D},\, r \sim \sigma}[\mathcal{E}]
    \le 0.55.
\]
Then there exists a $T$, such that:
\[
    \mathbb{P}_{r \sim \sigma}[\mathcal{E}\mid T]
    \le 0.55,
\]
which completes the proof.
\end{proof}

\subsection{Proof of \Cref{lem:lb}}
\label{sec:proof_lem_lb}
In this section, we give the proof of \Cref{lem:lb}. We start with defining the (hard) input for the lower bound.

\paragraph{The Basic Structures of Points and the Metric.} Let $P := \{p_{i,j} \mid i \in [n^{\frac{2}{3}}], j \in [n^{\frac{1}{3}}]\}$ be the point set.
The distance function $\dist : P \times P \to \mathbb{R}_+$ is defined with respect to additional (integer) parameters
$\Delta \geq \Omega(n^{1 / 3}) \geq 1$,
$K \in \{1, \Delta\}$ and $i^*\in[n^{2/3}]$ and $j^*\in [n^{1/3}]$ (whose values are random and will be picked later).
The distance function over $P$ is defined as follows:
\begin{equation}
\label{eqn:def_dist}
    \dist(p_{i,j}, p_{k,l}) =
    \begin{cases}
        \Delta, & i \neq k, \\
        1, & i = k \text{ and } p_{i^*,j^*} \notin \{p_{i,j}, p_{k,l}\}, \\
        K, & i = k \text{ and } p_{i^*,j^*} \in \{p_{i,j}, p_{k,l}\},
    \end{cases}
\end{equation}
where the trivial case $p_{i,j} = p_{k,l}$ is omitted. 
To show that the distances satisfy the triangle inequality,
we assume that on the contrary that there exist distinct points $p_{i_1,j_1},p_{i_2,j_2},p_{i_3,j_3}$, such that
$$\dist(p_{i_1,j_1},p_{i_2,j_2}) + \dist(p_{i_2,j_2},p_{i_3,j_3}) < \dist(p_{i_1,j_1},p_{i_3,j_3}).$$
Note that for $x\ne y$, $\dist(x,y)\in\{1,\Delta\}$. It must be:
\begin{align}
    \dist(p_{i_1,j_1},p_{i_2,j_2}) &= \dist(p_{i_2,j_2},p_{i_3,j_3}) = 1 
    \label{eq:1} \\
    \dist(p_{i_1,j_1},p_{i_3,j_3}) &= \Delta
    \label{eq:2}
\end{align}
From \Cref{eq:1}, we have $i_1 = i_2 = i_3$. 
As $\dist(p_{i_1,j_1},p_{i_3,j_3}) = \Delta$, \Cref{eq:1} forces $K = \Delta$, with one of $j_1,j_3$ equal to $j^*$.
Then one of $\dist(p_{i_1,j_1},p_{i_2,j_2}), \dist(p_{i_2,j_2},p_{i_3,j_3})$ is $\Delta$,
which contradicts with \Cref{eq:1}. Thus, the triangle inequality holds.

\paragraph{Distribution of the Metric and Points.} We next define a distribution over potential identifiers (IDs) for the points, as well as the special indices $i^*,j^*$. 

\begin{itemize}
    \item For each point $p_{i,j}$, define an identifier $\text{ID}_{i,j}:=(i, j, a_{i,j})$, 
where each $a_{i,j}$ is drawn independently and uniformly at random from $\{0, 1\}$. Let $U:=\{(i,j,a_{i,j})\mid i \in [n^{\frac{2}{3}}], j \in [n^{\frac{1}{3}}]\}$ be the set of all IDs.
Note that there is a one-to-one correspondence between $U$ and $P$.
\item The index $i^*$ is sampled uniformly from $[n^{\frac{2}{3}}]$, 
and $j^*$ uniformly from $[n^{\frac{1}{3}}]$. 
\end{itemize}

For $K\in\{1,\Delta\}$, let ${\cal D}^K$ denote the above distribution over the points, IDs and indices $i^*,j^*$ with the corresponding $K$-value in the distance function. We define the distribution $\cal D$ as follows: (1) pick $K=1$ with probability $1/2$ and $K=\Delta$ otherwise; (2) draw IDs and indices $i^*,j^*$ as above from ${\cal D}^K$. 
Thus, an instance $T\sim \mathcal{D}$ can be written as the 4-tuple $(U, i^*, j^*, K)$.

\paragraph{The Data Stream.}

For convenience, define $C_t := \{p_{j,t} \mid j \in [n^{\frac{2}{3}}]\}$ for $t \in [n^\frac13]$ and $R_t := \{p_{t,i} \mid i \in [n^{\frac{1}{3}}]\}$ for $t \in [n^\frac23]$. 
Intuitively, the sets $C_t$ correspond to the $t$-th column of this matrix-like arrangement of points, while the sets $R_t$ correspond to the $t$-th row.

We describe the stream as follows, where we use the language of points in $P$
but in fact they are presented to the algorithm as the corresponding point in $U$ (recalling that there is a one-to-one correspondence between $U$ and $P$):
\begin{enumerate}
    \item Insert points in $C_1$ in lexicographical order of the row index: $p_{1,1},p_{2,1},\ldots,p_{n^{\frac{2}{3}},1}$ 
    \item Continue column by column, inserting $C_2, C_3, \ldots, C_{n^{\frac{1}{3}}}$,
    each in lexicographical order (of the row index).
    \item Delete all points in $P \setminus R_{i^*}$, in an arbitrary order.
\end{enumerate}

Note that the remaining points are exactly those in $R_{i^*}$.

We will prove that no (deterministic) algorithm using less than $0.01 \cdot n^{\frac{1}{3}}$ bits of space can compute an estimate $\eta > 0$ that is, 
with high probability, an $O(\Delta / n^{1 / 3})$-approximation to the Max-Cut value of the remaining points. 
Let $m := 0.01 \cdot n^{\frac{1}{3}}$.

To this end, we make the following crucial observation such that the \maxcut value differs by a $\poly(n)$ factor between $K = 1$ and $K = \Delta$ cases.
Hence, it suffices to show that the algorithm cannot differentiate if $K = 1$ or $K = \Delta$ in small space.

\begin{fact}
    \label{fact:K-gap}
When $K=1$, the Max-Cut value is at most $n^{2/3}/2$; when $K=\Delta$, the Max-Cut value is at least $\Delta(n^{1/3}-1)$.
\end{fact}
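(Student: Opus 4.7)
The plan is a direct computation, leveraging the very simple structure of the remaining point set after all deletions. Since the stream deletes everything outside $R_{i^*}$, the dataset whose \maxcut we need to bound is exactly $R_{i^*} = \{p_{i^*, 1}, \ldots, p_{i^*, n^{1/3}}\}$, a set of $n^{1/3}$ points sharing the first coordinate $i^*$. Reading off \Cref{eqn:def_dist} for pairs inside $R_{i^*}$, we have $\dist(p_{i^*, j^*}, p_{i^*, j}) = K$ for every $j \neq j^*$, and $\dist(p_{i^*, j}, p_{i^*, l}) = 1$ for every $j, l \neq j^*$ with $j \neq l$. So the whole analysis reduces to a complete graph on $n^{1/3}$ vertices in which all edges have weight $1$, except the $n^{1/3} - 1$ edges incident to $p_{i^*, j^*}$, which have weight $K$.

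For the upper bound when $K = 1$, I would observe that every pairwise distance in $R_{i^*}$ equals $1$, so for any subset $S \subseteq R_{i^*}$ we have $\cut_{R_{i^*}}(S) = |S| \cdot (n^{1/3} - |S|)$. Maximizing this over $|S|$ gives at most $\lfloor n^{1/3}/2 \rfloor \cdot \lceil n^{1/3}/2 \rceil \leq n^{2/3}/4 \leq n^{2/3}/2$.

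For the lower bound when $K = \Delta$, I would exhibit a specific witness cut, namely $S := \{p_{i^*, j^*}\}$. Every term in $\cut_{R_{i^*}}(S)$ is then of the form $\dist(p_{i^*, j^*}, p_{i^*, j})$ with $j \neq j^*$, which equals $\Delta$, so $\cut_{R_{i^*}}(S) = \Delta \cdot (n^{1/3} - 1)$. Since $\maxcut(R_{i^*}) \geq \cut_{R_{i^*}}(S)$, this gives the claimed lower bound.

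There is no genuine obstacle: the only thing to verify is the identification of the surviving set with $R_{i^*}$ and the read-off of the three cases of $\dist$ restricted to that set; both bounds then follow from one-line arguments (a global argument for the upper bound, a single explicit cut for the lower bound). The factor $\poly(n)$ gap between the two regimes, which is what the subsequent indistinguishability argument exploits, follows because $\Delta(n^{1/3} - 1) / (n^{2/3}/2) = \Omega(\Delta / n^{1/3}) = \Omega(1)$ for the stated $\Delta \geq \Omega(n^{1/3})$, and can be made an arbitrary $\poly(n)$ factor by choosing $\Delta$ larger.
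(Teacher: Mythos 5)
Your proposal is correct and matches the paper's own argument: both cases reduce to the surviving clique $R_{i^*}$, with the $K=1$ bound following from the unit-weight clique cut formula and the $K=\Delta$ bound from the single-outlier cut $S=\{p_{i^*,j^*}\}$. The only cosmetic difference is that the paper asserts this outlier cut is the maximum, while you (correctly, and sufficiently) only use it as a witness for the lower bound.
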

\begin{proof}
If $K=1$, all remaining points are within the same row $R_{i^*}$, with pairwise distances equal to $1$. The maximum cut in a clique of size $n^{1/3}$ is $\lfloor \frac{n^{\frac{1}{3}}}{2} \rfloor \cdot (n^{\frac{1}{3}} - \lfloor \frac{n^{\frac{1}{3}}}{2} \rfloor)\leq n^{2/3}/2$.

If $K=\Delta$, then the point $p_{i^*,j^*}$ is an outlier: its distance to any other point is $\Delta$, while all other pairs are at distance $1$.
Thus, the maximum cut separates $p_{i^*,j^*}$ from the rest, achieving cut value $\Delta \cdot (n^{\frac{1}{3}} - 1)$.
\end{proof}

\begin{proof}[Proof of \Cref{lem:lb}]
By \Cref{fact:K-gap}, it suffices to show that the algorithm cannot differentiate $K = 1$ or $K = \Delta$.
Recall that in \eqref{eqn:def_dist}, the algorithm learns $K$ only if the algorithm queries the distance between $p_{i^*, j^*}$ and another point in $R_{i^*}$,
since otherwise, the input to the algorithm would be the same, and the algorithm must give the same return value because the algorithm is deterministic,
in which case the algorithm cannot differentiate the value of $K$.
This is a random event because of the randomness of $i^*$ and $j^*$, and we define $\mathcal{E}$ as this crucial event, as follows.
\begin{definition}
Let $\cal E$ be the event that the algorithm has ever queried the distance between $p_{i^*,j^*}$ and another point in $R_{i^*}$.
\end{definition}
The main technical lemma is to show $\mathcal{E}$ happens with small probability, 
which is stated as follows whose proof is postponed to \Cref{sec:proof_lemma_bound_E},
and this would conclude \Cref{lem:lb}.
\begin{lemma}
\label{lemma:bound_E}
    $\Pr[\mathcal{E}] \leq 0.1$.
\end{lemma}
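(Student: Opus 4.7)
The plan is to reduce to a "pretend" execution and exploit the randomness of $(i^*, j^*)$. Specifically, I couple the true execution to a pretend one in which the oracle always answers as if $K=1$, regardless of the actual value of $K$. Since the oracle's output depends only on whether the two queried points lie in the same row (which is readable from their IDs) and whether one of them is $p_{i^*,j^*}$, the pretend and true executions agree on every non-critical query; they therefore agree up to the first critical query, so $\Pr[\mathcal{E}]$ is identical in both. The advantage of the pretend execution is that its state trajectory and query sequence become deterministic functions of $(U,i^*)$ alone, independent of $(j^*,K)$.

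Define $J(U,i^*)\subseteq[n^{1/3}]$ to be the set of $j$ such that the pretend algorithm ever outputs a valid same-row-$i^*$ query involving $\text{ID}_{i^*,j}$. Since $j^*$ is uniform on $[n^{1/3}]$ and independent of $(U,i^*)$, we have
\[
\Pr[\mathcal{E}] \;=\; \mathbb{E}_{U,i^*}\!\left[\frac{|J(U,i^*)|}{n^{1/3}}\right],
\]
so it suffices to prove $\mathbb{E}[|J|]\le 0.1\,n^{1/3}$. The central information-theoretic observation is that to output $\text{ID}_{i^*,j}=(i^*,j,a_{i^*,j})$ validly, the state at the time of the query must deterministically encode the uniformly random bit $a_{i^*,j}$. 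Since the bits $\{a_{i,j}\}$ are independent uniform and the state $s$ satisfies $H(s)\le m$, subadditivity of mutual information across the independent row vectors $A_{i_0}:=\{a_{i_0,j}\}_{j}$ gives $\sum_{i_0} I(A_{i_0};s)\le H(s)\le m$, hence $\mathbb{E}_{i^*}[I(A_{i^*};s\mid i^*)]\le m/n^{2/3}$. Combined with a data-processing argument — exploiting that no new information about $A_{i^*}$ can enter the state after the insertion phase because row-$i^*$ points are never deleted — this yields $\mathbb{E}[|J|]=O(m)$, so $\Pr[\mathcal{E}]\le O(m/n^{1/3})=O(0.01)\le 0.1$ for $m<0.01\,n^{1/3}$.

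The main obstacle will be controlling $|J|$ \emph{across the entire execution} rather than at a single time step: at any fixed $\tau$ the number of deterministically-encoded row-$i^*$ bits is at most $I(A_{i^*};s_\tau\mid i^*)$, but the union $\bigcup_\tau K(s_\tau)\cap R_{i^*}$ could a priori grow larger if the algorithm cycles bits in and out of memory during insertion. I plan to split the analysis into two phases to resolve this. In the post-insertion phase (after all row-$i^*$ insertions), data processing plus the fact that deletions only reveal non-row-$i^*$ IDs (independent of $A_{i^*}$) freezes the set of queryable row-$i^*$ IDs inside $K(s_n)\cap R_{i^*}$, whose expected size is $\le m/n^{2/3}$. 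In the during-insertion phase, the algorithm is oblivious to $i^*$ (its state is a function of $U$ alone, since $i^*$ is revealed only by the deletions), so for any same-row query it outputs, the probability over the uniform $i^*$ that the row matches $i^*$ is $1/n^{2/3}$; combined with the $1/n^{1/3}$ factor from $j^*$ and the bound on the number of same-row pairs the algorithm can simultaneously hold in its $O(m/\log n)$-ID state, averaging over $i^*$ yields an $O(m)$ bound on this phase's contribution as well. Summing the two phases completes the proof.
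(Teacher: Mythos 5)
Your high-level framing coincides with the paper's: you replace the true oracle by one that answers as if there were no outlier (the paper's auxiliary $\dist'$), observe that the run then depends only on $(U,i^*)$, and reduce to bounding the chance that a row-$i^*$ pair is ever validly queried, splitting into the insertion and deletion phases and cashing in the uniformity of $i^*$ and $j^*$. The divergence — and the gap — is in the core counting step. Your central premise, that ``to output $\mathrm{ID}_{i^*,j}$ validly, the state must deterministically encode the bit $a_{i^*,j}$,'' is false in this model: the algorithm may simply \emph{guess} the third coordinate and succeed with probability $1/2$, failing only if the guess is wrong. Consequently a mutual-information bound $I(A_{i^*};s)\le m/n^{2/3}$ on the state does not bound the number of valid row-$i^*$ queries; an algorithm whose state is independent of $A_{i^*}$ can still issue valid novel queries. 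The paper closes exactly this hole with a different device: it looks at the first $2m$ ``novel'' (not-currently-readable, deduplicated) IDs the algorithm queries, notes that they are all valid with probability $2^{-2m}$ for a fixed state, and union-bounds over the $2^m$ possible states, so that with probability $1-2^{-m}$ the algorithm dies within $2m$ novel queries per column (resp.\ per deletion phase); only then does the $2m/n^{2/3}$ (resp.\ $2m/n^{1/3}$) hitting probability over $i^*$ (resp.\ $j^*$) apply. Your sketch contains no substitute for this guessing analysis.

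Two further points in your plan would fail as written. First, for the insertion phase you invoke ``the number of same-row pairs the algorithm can simultaneously hold in its $O(m/\log n)$-ID state''; this presumes the algorithm may only query IDs it currently stores, which is precisely the restriction (of the Cohen-Addad et al.\ model) that this paper explicitly drops — here the state is an arbitrary $m$-bit string and any syntactically well-formed ID may be queried. Second, you correctly flag that a single-time-step information bound does not control $\bigcup_\tau K(s_\tau)\cap R_{i^*}$, but the proposed resolution inherits both problems above: the ``frozen set of queryable row-$i^*$ IDs'' after insertion is not well defined once guessing is allowed, and the per-column contribution during insertion is unbounded without the first-$2m$-novel-queries truncation. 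The reduction to bounding $\mathbb{E}[|J|]$ is fine and the phase split is the right decomposition, but the argument that $\mathbb{E}[|J|]=O(m)$ is not established.
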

In conclusion, by \Cref{lemma:bound_E} the algorithm cannot differentiate whether $K = 1$ or $K = \Delta$ with probability $0.9$.
By \Cref{fact:K-gap}, which readily implies that the algorithm must suffer a $\poly(n)$ factor error,
provided that $\Delta$ is set to high-degree polynomial of $n$.
\end{proof}

\subsubsection{Proof of \Cref{lemma:bound_E}: Bounding $\Pr[\mathcal{E}]$}
\label{sec:proof_lemma_bound_E}

Recall that the input stream can be broken into two parts: a first part that only consists of insertions,
and a second part that only consists of deletions.
We call the first the \emph{insertion stage}, and the second part the \emph{deletion stage}.

Observe that $\dist$ is random and it particularly depends on $i^*$ and $j^*$.
Consequently, this potentially makes the internal state of the deterministic algorithm depending on $i^*$ and $j^*$,
and this makes it difficult to analyze the event $\mathcal{E}$.
We will see that this does not matter and there is a clean way around it.
Intuitively, consider the earliest time in the stream that $\mathcal{E}$ happens,
then before this time step, the algorithm is oblivious to the value of $K$,
and by \eqref{eqn:def_dist}, $\dist = K$ if and only if the distance between $p_{i^*,j^*}$ and some other point in $R_{i^*}$ is queried,
one can conclude that up to this point, the algorithm is also oblivious to $i^*, j^*$.

To make this argument formal, consider the following auxiliary distance function $\dist'$
\begin{equation}
\label{eqn:def_distp}
    \dist'(p_{i,j}, p_{k,l}) =
    \begin{cases}
        \Delta, & i \neq k, \\
        1, & i = k \\
    \end{cases}
\end{equation}
Let $\X$ be the event that the algorithm queries $\dist'$ 
with two distinct points in $R_{i^*}$ during the insertion stage.
Let $\Y$ be the event that the algorithm queries $\dist'$ between 
$p_{i^*,j^*}$ and another point in $R_{i^*}$ during the deletion stage.
Next, we show that the event $\mathcal{E}$ is upper bounded by the union of event $\X$ and $\Y$.
\begin{lemma}
\label{lemma:calE_XY}
    $\Pr[\mathcal{E}] \leq \Pr[\X \vee \Y]$.
\end{lemma}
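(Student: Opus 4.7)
The plan is to couple the deterministic algorithm $\A$ under the two distance functions $\dist$ and $\dist'$, using the same input $T=(U,i^*,j^*,K) \sim \mathcal{D}$ and hence the same stream of insertions/deletions of IDs. Comparing \eqref{eqn:def_dist} with \eqref{eqn:def_distp}, the two functions differ on a pair $\{p_{i,j},p_{k,l}\}$ only when $i=k=i^*$ and one of the two points equals $p_{i^*,j^*}$; on such a ``special'' pair $\dist$ returns $K$ while $\dist'$ returns $1$, and on every other pair the two agree. Since $\A$ is deterministic and the stream is identical in both executions, as long as no special pair has been queried so far, the two executions remain in the same internal state and therefore issue the same next query.

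Suppose $\mathcal{E}$ occurs in the $\dist$-execution, and let $\tau$ be the first time step at which $\A$ queries a special pair $\{p_{i^*,j^*},p'\}$ with $p' \in R_{i^*}\setminus\{p_{i^*,j^*}\}$. By the coupling, every query made strictly before $\tau$ returns the same value under $\dist$ and $\dist'$, so the $\dist'$-execution reaches step $\tau$ in the same state and issues exactly the same special pair query at that step. Now split on where $\tau$ falls in the stream: if $\tau$ lies in the insertion stage, then under $\dist'$ the algorithm has queried two distinct points both lying in $R_{i^*}$, which is precisely the event $\X$; if $\tau$ lies in the deletion stage, then under $\dist'$ the algorithm has queried $p_{i^*,j^*}$ against some other point of $R_{i^*}$, which is precisely the event $\Y$. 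In either case $\X \vee \Y$ holds.

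Because $\mathcal{E}$, $\X$, and $\Y$ are all random variables on the same probability space (the draw of $T \sim \mathcal{D}$, the algorithm being deterministic), the pointwise implication $\mathcal{E} \subseteq \X \cup \Y$ just established gives $\Pr[\mathcal{E}] \le \Pr[\X \vee \Y]$. The only delicate point is justifying the coupling in the first paragraph, i.e.\ verifying that the set of pairs on which $\dist$ and $\dist'$ disagree is exactly the set of special pairs characterizing $\mathcal{E}$; once this is in hand, the case analysis on whether $\tau$ belongs to the insertion stage or the deletion stage is immediate and matches the definitions of $\X$ and $\Y$ verbatim.
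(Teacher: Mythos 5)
Your proof is correct and takes essentially the same approach as the paper: both arguments rest on the observation that $\dist$ and $\dist'$ disagree only on the pairs whose query triggers $\mathcal{E}$, so by determinism the two executions coincide up to the first such query, whose placement in the insertion or deletion stage yields $\X$ or $\Y$ respectively. The only cosmetic difference is that the paper routes through intermediate events $\X',\Y'$ (the $\dist$-versions of $\X,\Y$) and shows $\Pr[\X'\vee\Y']=\Pr[\X\vee\Y]$, whereas you establish the pointwise implication $\mathcal{E}\subseteq\X\cup\Y$ directly at the first special query time $\tau$.
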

\begin{proof}
    Define two intermediate events $\X', \Y'$,
    such that $\X'$ is the event that the algorithm queries $\dist$ 
    with two distinct points in $R_{i^*}$ during the insertion stage,
and $\Y'$ is that the algorithm queries $\dist$ between 
    $p_{i^*,j^*}$ and another point in $R_{i^*}$ during the deletion stage.
    It is immediate that $\mathcal{E}$ implies $\X' \vee \Y'$. Indeed, if $\mathcal{E}$ happens,
    then a distance query is made to a pair containing $p_{i^*, j^*} \in R_{i^*}$ either in the insertion stage which implies $\X'$, or the deletion stage which implies $\Y'$.
    Therefore, $\Pr[\mathcal{E}] \leq \Pr[\X' \vee \Y']$.
    
    Now, we argue that $\Pr[X' \vee \Y'] = \Pr[\X \vee \Y]$, and this would conclude the lemma.
    To this end, we first argue that $X'$ happens if and only if $X$ happens.
    A key observation is that $\dist$ and $\dist'$ agrees on all point pairs that do not contain $p_{i^*,j^*}$ (recalling \eqref{eqn:def_dist} and \eqref{eqn:def_distp}).
    Now, fix a realization of $i^*$ and $j^*$.
    As long as the algorithm does not query a pair that contains $p_{i^*, j^*}$ (in the insertion stage),
    the response of distance remain the same regardless of $\dist$ or $\dist'$,
    and this further means the query sequence of the algorithm is the same since the algorithm is deterministic.
    This implies that the time step of the insertion stage such that $X$ or $X'$ happen is the same, denoted as $T$.
    Therefore, if both $X$ and $X'$ do not happen at $T$, then they must both happen immediately after $T$;
    moreover, if $X$ ($X'$ resp.) happens on or before $T$, then $X'$ ($X$ resp.) must also happen because the algorithm is deterministic and the input is the same.
    We conclude that $\Pr[X] = \E[\Pr[X \mid i^*, j^*]] = \E[\Pr[X' \mid i^*, j^*]] = \Pr[X']$.
Similarly, still condition on the realization of $i^*, j^*$,
    if neither $X$ nor $X'$ happens, then either $Y$ and $Y'$ both happens or neither of the two happens. 
    Hence, this concludes that $\Pr[\X' \vee \Y'] = \Pr[\X \vee \Y]$

    This finishes the proof of \Cref{lemma:calE_XY}.
\end{proof}
By union bound, $\Pr[\X \vee \Y] \leq \Pr[\X] + \Pr[\Y]$,
and hence it suffices to bound $\Pr[\X]$ and $\Pr[\Y]$ respectively.

\paragraph{Insertion Stage: Bounding $\Pr[\X]$.}
\label{subsub:insertion}

During the insertion stage, the columns $C_1, \ldots, C_{n^{1/3}}$ arrive sequentially.  
For each $t \in [n^{1/3}]$, let $\X_t$ denote the event that $\X$ occurs during the insertion of $C_t$, i.e., from the time the algorithm reads $\text{ID}_{t,1}$ until it is ready to read $\text{ID}_{t+1,1}$  
(for $t = n^{1/3}$, this stage ends when the first deletion begins).  
By the union bound,
$\Pr[\X] \le \sum_{t=1}^{n^{1/3}} \Pr[\X_t]$.
It therefore suffices to bound $\Pr[\X_t]$ for each $t$.

Fix a column $t$.  
Let $V \subseteq U$ be the set of IDs of $C_t$.
Let $I$ be the internal memory state of the algorithm immediately before the insertion of $C_t$, represented as an $m$-bit string.  
Denote this query sequence of point pairs as $S$,
and let $T := \{ s : (s, t) \in S \} \cup \{ t : (s, t) \in S \}$
be the set of \emph{distinct} point IDs that belong to any query pair (so $T$ is \emph{not} a multiset).
Furthermore, exclude the IDs in $C_t$ from $T$, i.e., let $T' := T \setminus V$.

\begin{fact}
    \label{fact:Rprime}
    $T'$ is determined by $V$ and $I$, and is independent of $i^*$.
\end{fact}
\begin{proof}
The fact that $T'$ is determined by $I$ and $V$ is by definition.
The randomness of $V$ comes from $U$ which is the IDs.
On the other hand, since the algorithm is deterministic,
$I$ depends on $\dist'$ (which is deterministic) as well as the set of IDs ever inserted (whose randomness comes from $U$, and more specifically from the IDs of $C_1, \ldots, C_{t - 1}$).
\end{proof}

\begin{lemma}
    \label{lemma:PrRprime}
    With respect to the randomness of IDs, $\Pr[|T'| > 2m]  \leq \frac{1}{2^m}$.
\end{lemma}
\begin{proof}

    The fact that $T'$ depends on $I$ which depends on $U$ (the IDs)
    makes the analysis of randomness tricky,
    and here we need to do a low-level counting analysis with respect to the behavior of the deterministic algorithm.

    This lemma is trivial when $t = 1$, since $|T'|$ can only be $0$ as we remove the IDs of $C_1$ which is the only input points seen so far.
    Next, we focus on $t \geq 2$.

    We analyze the conditional probability $\Pr[|T'| > 2m \mid V]$, for any $V$,
    and we fix some $V$ in the following analysis.
    For $Z \in \{0, 1\}^m$,
    define $T'_{Z}$ be the realization of $T'$ such that $I = Z$ (and given $V$).
    Since the algorithm is deterministic, we can interpret the behavior of the algorithm as two mappings:
    a) a mapping $f$ that maps the input IDs on $C_1, \ldots, C_{t - 1}$, denoted as $U_{t-1}$, to the internal state $I$, i.e., $f : U_{t - 1} \mapsto I$;
    and b) a mapping $g$ that maps the internal state $I$ to  $T'_{I}$,
    i.e., $g : I \mapsto T'_{I}$.
    Hence,
    \begin{align}
        \Pr[|T'| > 2m \mid V]
        = \sum_{Z \in \{0, 1\}^m}  \mathbb{I}[|T'_{Z}| > 2m] \cdot
        \frac{|f^{-1}(Z)|}{2^{|U_{t - 1}|}},
        \label{eqn:PrRPrime}
    \end{align}
    where $\mathbb{I}$ is the indicator function.
    If $|T'_{Z}| > 2m$, we take the $2m$-length prefix of $T'_{Z}$,
    denoted as $T''_Z$,
    and this $T''_Z \subseteq U_{t - 1}$ defines a specific realization of IDs $(i, j, a_{i, j})$ of some $2m$ positions $(i, j)$'s in $C_1, \ldots, C_{t - 1}$.
    Note that for different $Z$ this $T''_Z$ may also be different.

    Now, denote the subset of $f^{-1}(Z)$ that satisfies a realization $T''_Z$ as $F_{Z, T''_Z}$.
    Then $\frac{|F_{Z, T''_Z}|}{2^{|U_{t-1}|}}$ is the probability that we see this specific realization.
    Therefore, continue from \eqref{eqn:PrRPrime}
    \begin{align}
        \Pr[|T'| > 2m \mid V]
        = \sum_{Z \in \{0, 1\}^m}  \mathbb{I}[|T'_{Z}| > 2m] \cdot 
        \frac{|f^{-1}(Z)|}{2^{|U_{t - 1}|}} 
        \leq \sum_{Z \in \{0, 1\}^m} \frac{|F_{Z, T''_Z}|}{2^{|U_{t - 1}|}}.
        \label{eqn:PrRPrimeSum}
\end{align}
    Observe that $F_{Z, T''_Z}$ is a subset of the $2^{|U_{t-1}|}$ realizations of IDs in $U_{t-1}$,
    such that a fixed $2m$ locations of $(i, j) \in C_1 \cup \ldots \cup C_{t-1}$'s takes a fixed realization in $\{0, 1\}^{2m}$, as specified in $T''_Z$.
    Therefore, at most $|U_{t-1}| - 2m$ locations are free from the constraint
    and $|F_{Z, T''_Z}| \leq 2^{|U_{t - 1}| - 2m}$.
    Substituting this to \eqref{eqn:PrRPrimeSum}, we have
    \begin{align*}
        \Pr[|T'| > 2m \mid V]
\leq \sum_{Z \in \{0, 1\}^m} \frac{|F_{Z, T''_Z}|}{2^{|U_{t - 1}|}} 
        \leq \sum_{Z \in \{0, 1\}^m} \frac{1}{2^{2m}} 
        = \frac{1}{2^m}.
\end{align*}
    We finish the proof of \Cref{lemma:PrRprime} by taking
        $\Pr[|T'| > 2m ] = \E[\Pr[|T'| > 2m \mid V]] \leq \frac{1}{2^m}$.
\end{proof}

Now, condition on the event that $|T'| \leq 2m$,
since $T'$ is independent of $i^*$ (\Cref{fact:Rprime}),
we conclude that
$\Pr[\X_t \mid |T'| \leq 2m] \leq \frac{2m}{n^{2 / 3}}$,
where we are using that $i^*$ is chosen uniformly at random from $[n^{2 / 3}]$
and $T'$ has $2m$ places to hit $i^*$.
Hence, by \Cref{lemma:PrRprime}
\begin{equation*}
    \Pr[\X_t] \leq \Pr[\X_t \mid |T'| \leq 2m]  + \Pr[|T'| > 2m]
\leq \frac{1}{2^m} + \frac{2m}{n^{2 / 3}}.
\end{equation*}
This further implies that
\begin{equation}
\Pr[\X] \leq \sum_{t= 1}^{n^{1 / 3}} \Pr[\X_t] \leq n^{1 / 3} \left( \frac{1}{2^m} + \frac{2m}{n^{2 / 3}}  \right) \leq 0.05.
\end{equation}

\paragraph{Deletion Stage: Bounding $\Pr[\Y]$.}
    Recall that $\Y$ is the event that the algorithm queries $\dist'$
    between $p_{i^*,j^*}$ and another point in $R_{i^*}$ during the deletion stage.
    The proof is similar to that for insertion-only,
    and we only sketch the key steps and the differences.

Condition on $i^*$, and we use the randomness of $j^*$ and the IDs,
    i.e., we analyze $\Pr[\Y \mid i^*]$.
    Let $G \subseteq U$ be the set of IDs of $P \setminus R_{i^*}$,
and let $I$ be the internal memory state of the algorithm before the deletion, represented as an $m$-bit string.  

    Similar to the insertion stage, for each realization $Z \in \{0, 1\}^m$
    of the internal states of the algorithm,
    we define $T$ as the distinct point IDs that belong to any query pair,
    and let $T' := T \setminus G$ be that excluding IDs in $G$.
We can establish a similar lemma as in \Cref{lemma:PrRprime}:
    we have $\Pr[ |T'| > 2m \mid i^* ] \leq \frac{1}{2^m}$,
    which follows from the randomness of the IDs.
    
    Next, observe that $T'$ is independent of $j^*$,
    and we have $\Pr[\Y \mid |T'| \leq 2m, i^*]
    \leq \frac{2m}{n^{1/3}}$,
    where we are using the randomness of $j^*$,
    which is chosen uniformly at random from $[n^{1 / 3}]$ and $T'$ has $2m$ places to hit $j^*$. Hence,
    \begin{align*}
        \Pr[\Y \mid i^*]
        \leq \Pr[\Y \mid |T'| \leq 2m, i^*] + \Pr[|T'| > 2m \mid i^*]
        \leq \frac{1}{2^m} + \frac{2m}{n^{2 / 3}} \leq 0.05.
    \end{align*}
    Therefore, $\Pr[\Y] \leq \E[\Pr[\Y \mid i^*]] \leq 0.05$.

\begin{proof}[Proof of \Cref{lemma:bound_E}]
    In conclusion, we have shown $\Pr[\mathcal{E}] \leq \Pr[\X \vee \Y] \leq \Pr[\X] + \Pr[\Y]$,
    and combining with the upper bounds for $\Pr[\X]$ and $\Pr[\Y]$ we just established,
    we obtain the desired upper bound of $\Pr[\mathcal{E}]$, and this finishes the proof.
\end{proof}

 \bibliography{ref.bib}

\end{document}